\newcommand{\diag}{\mathop{\mathrm{diag}}\nolimits}
 \newtheorem{definition}{Definition}
 \newtheorem{property}{Property}
 \newtheorem{problem}{Problem}
 \newtheorem{claim}{Claim}
 \newtheorem{remark}{Remark}
\begin{document}
\title{An Efficient High-Dimensional Sparse Fourier Transform\footnote{This manuscript has been submitted to IEEE Transactions on Aerospace and Electronic Systems for reviewing on Sept. 1st, 2016.}}

\author{{Shaogang Wang,~\IEEEmembership{Student Member,~IEEE},
Vishal M. Patel,~\IEEEmembership{Senior Member,~IEEE},
and Athina Petropulu,~\IEEEmembership{Fellow,~IEEE}} 
\thanks{Shaogang Wang, Vishal M. Patel and Athina Petropulu are with the department of Electrical and Computer Engineering at Rutgers University, 
   Piscataway, NJ USA. Email: \{shaogang.wang, vishal.m.patel, athinap\}@rutgers.edu. }
}


\maketitle


\begin{abstract}
We propose RSFT, which is an extension of  the one dimensional Sparse Fourier Transform algorithm to higher dimensions in a way that it can be applied to real, noisy data. The RSFT allows for off-grid frequencies. Furthermore, by incorporating  Neyman-Pearson detection, the frequency detection stages in RSFT do not require  knowledge of the exact sparsity of the signal and are more robust to noise.  We analyze the asymptotic
performance of RSFT, and study the computational complexity versus the worst case signal SNR tradeoff. We show that by choosing the proper parameters, the optimal tradeoff can be achieved. We discuss the application of RSFT on short range ubiquitous radar signal processing, and demonstrate its feasibility via simulations.
\end{abstract}

\begin{IEEEkeywords}
Array signal processing, sparse Fourier transform, detection and estimation,  radar signal processing.
\end{IEEEkeywords}

\section{Introduction}

\IEEEPARstart{H}{igh} dimensional FFT (N-D FFT) is used in many applications in which a multidimensional Discrete Fourier Transform (DFT) is needed, such as radar and imaging. However, its complexity increases with an increase in dimension, leading to costly hardware and low speed of system reaction.  The recently developed Sparse Fourier Transform (SFT) \cite{hassanieh2012nearly, Hassanieh:2012:SPA:2095116.2095209, ong2015fast, gilbert2014recent, ghazi2013sample, indyk2014sample}, designed for signals that have a small number of frequencies enjoys low complexity, and thus is ideally suited in big data scenarios \cite{hassanieh2012faster, hassanieh2014ghz, shi2014light}. One such application is radar target detection. The radar returns are typically sparse  in the target space, i.e., the number of targets is far less than the number of resolutions cells. This observation motivates the application of SFT in  radar signal processing.

The current literature on  multi-dimensional extensions of the SFT include \cite{rauhsparse, ghazi2013sample, ong2015fast, indyk2014sample}. Those works mainly address sample complexity, i.e., using the least number of time domain samples to reconstruct the signal frequencies. In order to detect the significant frequencies in an approximately sparse settings, i.e., signal corrupted by additive noise, the aforementioned methods assume knowing the exact sparsity, and compare the frequency amplitude with a predefined threshold. However, in many real applications, the exact signal sparsity may be either unknown or subject to change. For example, in the radar case, the number of targets to be detected is typically unknown and usually varies from time to time. Also, setting up an ideal threshold for detection in noisy cases is not trivial, since it relates to the tradeoff between probability of detection and false alarm rate. However, those issues have not been studied in the SFT literature.

One of the constrains in the aforementioned SFT algorithms is the assumption that  the signal discrete  frequencies are all on-grid. In reality, however, the signal frequencies, when discretized and depending on the grid size can fall between grid points. The consequence of off-grid frequencies is leakage to other frequency bins, which essentially reduces the  sparsity  of the signal. To refine the estimation of off-grid frequencies starting from the initial SFT-based estimates,\cite{shi2014light} proposed a gradient descent method. However, the method of \cite{shi2014light} has to enumerate all possible directions of the gradient for each frequency and compute the approximation error for each guessed direction,  which increases the computational complexity. Moreover, like the other aforementioned SFT methods, the thresholding for frequency detection is not clear in  \cite{shi2014light}. The off-grid frequency problem in the context of SFT was also studied   in \cite{boufounos2012s}, where it was assumed that in the frequency domain the signal and the noise are well separated by predefined gaps. However, this is a restrictive assumption that limits the applicability of the work.

In this paper we setup a SFT-based framework for sparse signal detection in a high dimensional frequency domain and propose a new algorithm, namely, the Realistic Sparse Fourier Transform (RSFT) which addresses the shortcomings discussed above. This paper makes the following contributions.
\begin{enumerate}
\item The RSFT algorithm does not require knowledge of the number of frequencies to be estimated. Also, it does not need the frequencies to be on-grid and does not require signal and noise to be separated in the frequency domain.  Our method reduces the leakage effect from off-grid frequencies by applying a window on the input time domain data (see Section \ref{sec:leakage} for details). The design of this window trades off frequency resolution, leakage suppression and computational efficiency. We shall point out that, unlike the work of  \cite{shi2014light} that recovers the exact off-grid frequency locations, our work aims to recover \emph{grided locations} of off-grid frequencies with less amount of computation.

\item We extend the RSFT into an arbitrary fixed high dimension, so that it can replace the N-D FFT in sparse settings and thus enable computational savings.
\item  We put RSFT into a Neyman-Pearson (NP) detection framework. Based on the signal model and other design specifications, we give the (asymptoticly) optimal  thresholds for the two detection stages of the RSFT (see 
Section \ref{sec:optimal} for details). Since the output of the first stage of detection serves as the input of the second stage, the two detection stages are interconnected. The detection thresholds are jointly found by formulating and solving an optimization problem, with its objective function minimizing the worst case Signal to Noise Ratio (SNR) (hence the system is more sensitive to weak signal), and its constrains connecting probability of detection and false alarm rate for both two stages.
\item We provide a quantitive measure of tradeoff between computational complexity and worst case signal SNR  for systems that use RSFT, which serves as a concrete design reference for system engineers.  
\item  We investigate the use of RSFT in multi-dimensional radar signal processing.
\end{enumerate}

A closely related technique to SFT is \emph{compressed sensing} (CS).  Compressed sensing-based methods  recover signals by exploiting their sparse features  \cite{cande2008introduction}. The application of CS methods  in MIMO radar is discussed intensively in \cite{chen2008compressed,yu2010mimo}.  At a high level, CS differs from SFT in that it assumes the signal can be sparsely represented by an overdetermined dictionary, and formulates the problem as an optimization problem with sparsity constraints ($l_0$ or $l_1$ norm). This problem is usually solved by convex optimization, which runs in polynomial time in $N$ for an $N$-dimensional signal. On the other hand, the SFT method finds an (approximately) \emph{sparse Fourier representation} of a signal in a mean square error sense ($l_2$ norm). The sample and the computational complexity of SFT is sub-linear to $N$, for a wide range of $N$-dimensional signals \cite{indyk2014sample}.   

A preliminary version of this work appeared in \cite{Wang1611:RSFT}.   A detailed analysis of the RSFT algorithm as well as extensive experimental results are extensions to \cite{Wang1611:RSFT}.

\subsection{Notation}
We use lower-case (upper-case) bold letters to denote vectors (matrices). $(\cdot)^T$ and $(\cdot)^H$ respectively denote the transpose and conjugate transpose of a matrix or a vector. $\|\cdot\|$ is Euclidean norm for a vector.   $[\mathbf{a}]_i$ is the $i_{th}$ element of vector $\mathbf{a}$.  All operations on indices in this paper are taken modulo $N$ denoted by $[\cdot]_N$. We use $\lfloor \cdot \rfloor$ to denote  rounding to floor. $[S]$ refers to the set of indices $\{0, . . . , S-1  \}$, and $[S]\setminus a$ is for eliminating element $a$ from set $[S]$. We use $\{0,1\}^B$ to denote the set of $B$-dimensional binary vectors.  We use $\diag (\cdot)$ to  denote forming a diagonal matrix from a vector and use $\mathbb{E}\{ \cdot \}$ to denote expectation. The DFT of signal ${\mathbf{s}}$ is denoted as $\hat{{\mathbf{s}}}$. We also assume that the signal length in each dimension is an integer power of 2.

This paper is organized as follows.   A brief background on the SFT algorithm is given in Section~\ref{sec:background}.  Details of the proposed RSFT algorithm are given in Section \ref{sec:algorithm}. Section \ref{sec:optimal} presents the derivation of the optimal threshold design for the RSFT algorithm. Then in Section \ref{sec:numerical} we provide some numerical results to verify the theoretical findings. An application of the  RSFT algorithm in radar signal processing is presented in Section \ref{sec:application} and finally, concluding remarks are made in Section \ref{sec:conclusion}.

\section{Preliminaries} \label{sec:background}
\subsection{Basic Techniques}
As opposed to the FFT which computes the coefficients of all $N$ discrete frequency components of an $N$-sample long signal, the SFT\cite{Hassanieh:2012:SPA:2095116.2095209}  computes only the $K$ frequency components of a $K$-sparse signal. Before outlining the SFT algorithm we provide some key definitions and properties, which is extracted and reformulated based on \cite{Hassanieh:2012:SPA:2095116.2095209} .

\begin{definition}
\textbf{(Permutation):} Define the transform $P_{\sigma, \tau}$ such that, given $\mathbf{x} \in \mathbb{C}^{N},\sigma, \tau \in [N]$, and $\sigma$ invertible 
\begin{equation} \label{eq:mod_inv}
\sigma \sigma^{-1} = [1]_N.
\end{equation}
Then, the following transformation is called \emph{permutation}
\begin{equation} \label{eq:permutation}
(P_{\sigma,\tau}x)_i = x_{\sigma i+\tau} .
\end{equation}
\end{definition}
The permutation has the following property.
\begin{property} \label{prop:permutation}
A modular reordering of the data in time domain results in a modular dilation and phase rotation in the frequency domain, i.e.,
\begin{equation} \label{eq:freqDilation}
\widehat{(P_{\sigma, \tau} x)}_{\sigma i} = \hat{x}_i  e^{-j\tau \frac{2\pi}{N}} .
\end{equation}
\end{property}

\begin{definition} \label{def:aliasing}
\textbf{(Aliasing):}
Let $\mathbf{x} \in \mathbb{C}^{N}, \mathbf{y} \in \mathbb{C}^{B}$, with $N,B$ powers of 2, and $B<N$. For $L = {N}/{B}$, a time-domain  aliased version of x is defined as
\begin{equation} \label{eq:aliasing}
y_i = \sum_{j=0}^{L-1} x_{i+Bj}, \quad i\in[B] .
\end{equation}
\end{definition}

\begin{property} \label{prop:aliasing}
Aliasing in time domain results in downsampling in frequency domain, i.e.,
\begin{equation} \label{eq:aliasing_prop}
\hat{y}_i = \hat{x}_{iL} .
\end{equation}
\end{property}

\begin{definition} \label{def:map}
\textbf{(Mapping):} Let $i, \sigma \in [N]$, where $\sigma$ satisfies (\ref{eq:mod_inv}). We define the mapping $\mathcal{M}(i, \sigma)$ such that 
\begin{equation} 
[B] \ni \mathcal{M}(i, \sigma) \equiv \lfloor \frac{B}{N}[i \sigma]_N \rfloor .
\end{equation}
\end{definition}

\begin{definition} \label{def:rev_map}
\textbf{(Reverse-mapping):} Let $\sigma^{-1} \in [N]$, $\sigma^{-1}$ satisfies  (\ref{eq:mod_inv}), and $ j \in [B]$. Define $\mathcal{R}(j, \sigma^{-1})$ a reverse-mapping such that
\begin{equation} 
\begin{split}
&\mathcal{R}(j, \sigma^{-1}) \equiv \{ [\sigma^{-1} u]_N \mid u \in \mathbb{S}\},
\end{split}
\end{equation}
where
\begin{equation} 
\begin{split}
\mathbb{S} \equiv \{v \in[N] \mid   j\frac{N}{B} \le v < (j+1)\frac{N}{B}]\}.
\end{split}
\end{equation}
\end{definition}

\subsection{SFT Algorithm}
At a high level, the SFT algorithm in \cite{Hassanieh:2012:SPA:2095116.2095209} runs two loops, namely the \emph{Location} loop and the \emph{Estimation} loop. The former finds the indices of the $K$ most significant frequencies from the input signal, while the latter estimates the corresponding Fourier coefficients. Here, we emphasize on \emph{Location} more than \emph{Estimation}, since the former is more relevant to the radar application that we consider. The \emph{Location} step provides frequency locations, which in the radar case is directly related to target parameters.   

In the \emph{Location} loop, a \emph{permutation} procedure reorders the input data in the time domain, causing the frequencies to also reorder. The permutation causes closely spaced frequencies to appear in well separated locations with high probability. Then, a flat-window\cite{Hassanieh:2012:SPA:2095116.2095209} is applied on the permuted signal for the purpose of extending  a single frequency  into a (nearly) boxcar, for a reason that will become apparent in the following. The windowed data are aliased, as in Definition \ref{def:aliasing}. The frequency domain equivalent of this aliasing is undersampling by $N/B$ (see Property \ref{prop:aliasing}).  The flat-window used at the previous step ensures that no peaks are lost due to the effective undersampling in the frequency domain. After this stage, a FFT of length $B$ is employed.

The permutation and the aliasing procedure effectively \emph{map}  the signal frequencies from $N$-dimensional space into a reduced $B$-dimensional space, where the first-stage-detection procedure  finds the significant frequencies' peaks, and then their indices are \emph{reverse mapped} into the original $N$-dimensional frequency space. However, the reverse mapping yields not only the true location of the significant frequencies, but also ${N}/{B}$ ambiguous locations for each frequency. To remove the ambiguity, multiple iterations of \emph{Location} with randomized permutation are performed. Finally, the second-stage-detection procedure locates the $K$ most significant frequencies from the accumulated data for each iteration.  More details about the SFT algorithm can be found in \cite{Hassanieh:2012:SPA:2095116.2095209}.

\section{The RSFT Algorithm} \label{sec:algorithm}

In this section we address some problems that have not been considered in the original SFT algorithm, namely the leakage from off-grid frequencies and optimal detection threshold design for both detection stages. Also, we extend the RSFT into arbitrary fixed high dimension. 

\subsection{Leakage Suppression for Off-grid Frequencies} \label{sec:leakage}
In real world applications, the  frequencies  are continuous and can take any value in $[0, 2\pi)$. When fitting a grid on these frequencies, leakage occurs from off-grid frequencies, which can diminish the sparsity of the signal. As the leakage due to strong frequency components can mask the contributions of weak frequency components,  it becomes difficult to determine the frequency domain peaks after permutation. (see Fig. \ref{fig:PreWinMotive} (c)).  To address this problem, we propose to multiply  the received time domain signal with a  window  before permutation. We call this procedure \emph{pre-permutation windowing}. The idea is to confine the leakage within a finite number of frequency bins, as illustrated in Fig. \ref{fig:PreWinMotive}.   

The choice of the pre-permutation window is determined by the required resolution, computational complexity, and degree of leakage suppression. Specifically, the side-lobe level should be lower than the noise level after windowing (see Fig. \ref{fig:Win_Permutation}). However, the larger the attenuation of the side-lobes, the wider the main lobe would be, thus lowering the frequency resolution. Meanwhile, a broader main lobe results in increased computational load, which will be discussed in Section \ref{seg:complexity}.

\begin{figure}[!t]
    \centering
    \subfloat[]{{\includegraphics[scale=0.2]{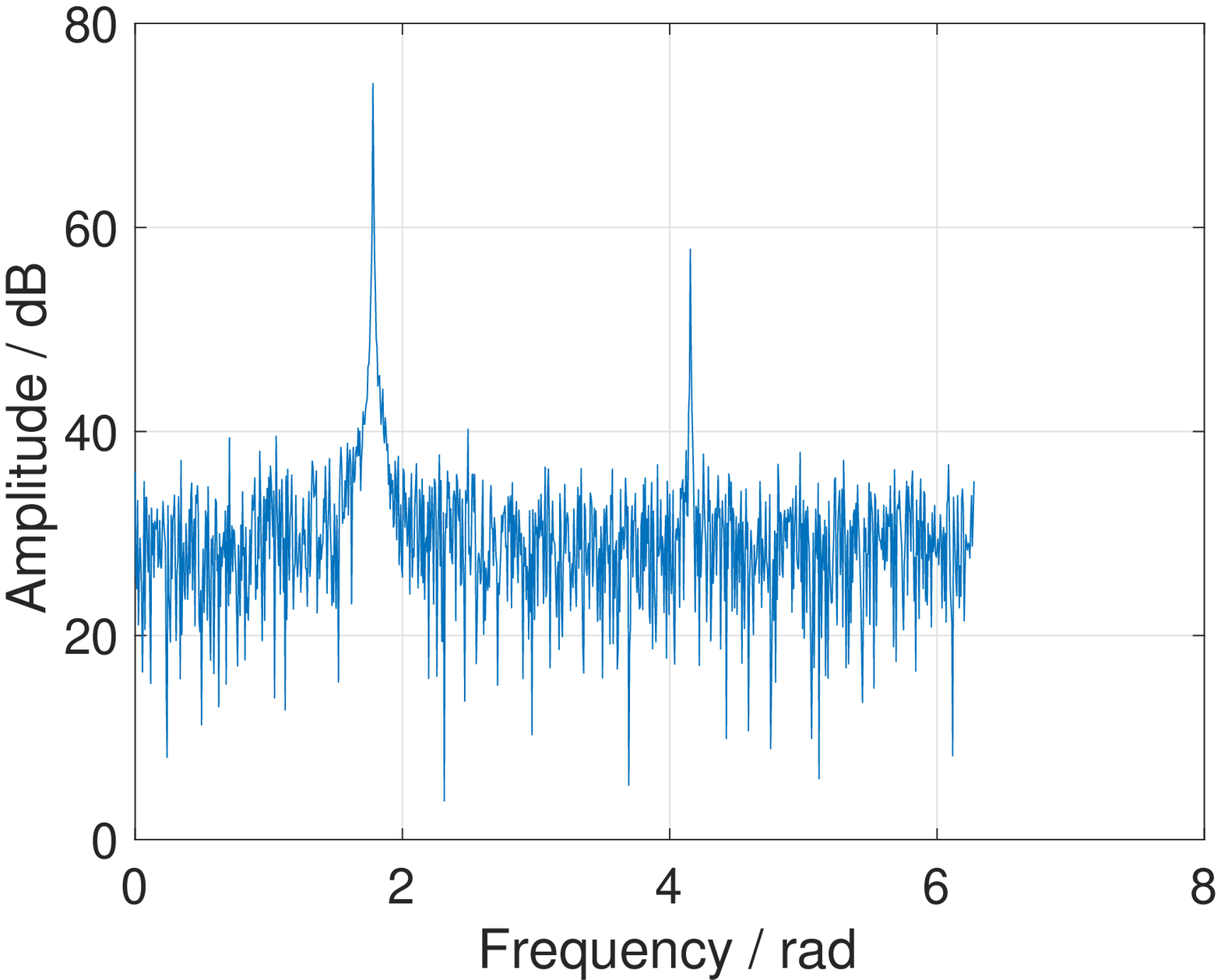} }}%
    \subfloat[]{{\includegraphics[scale=0.2]{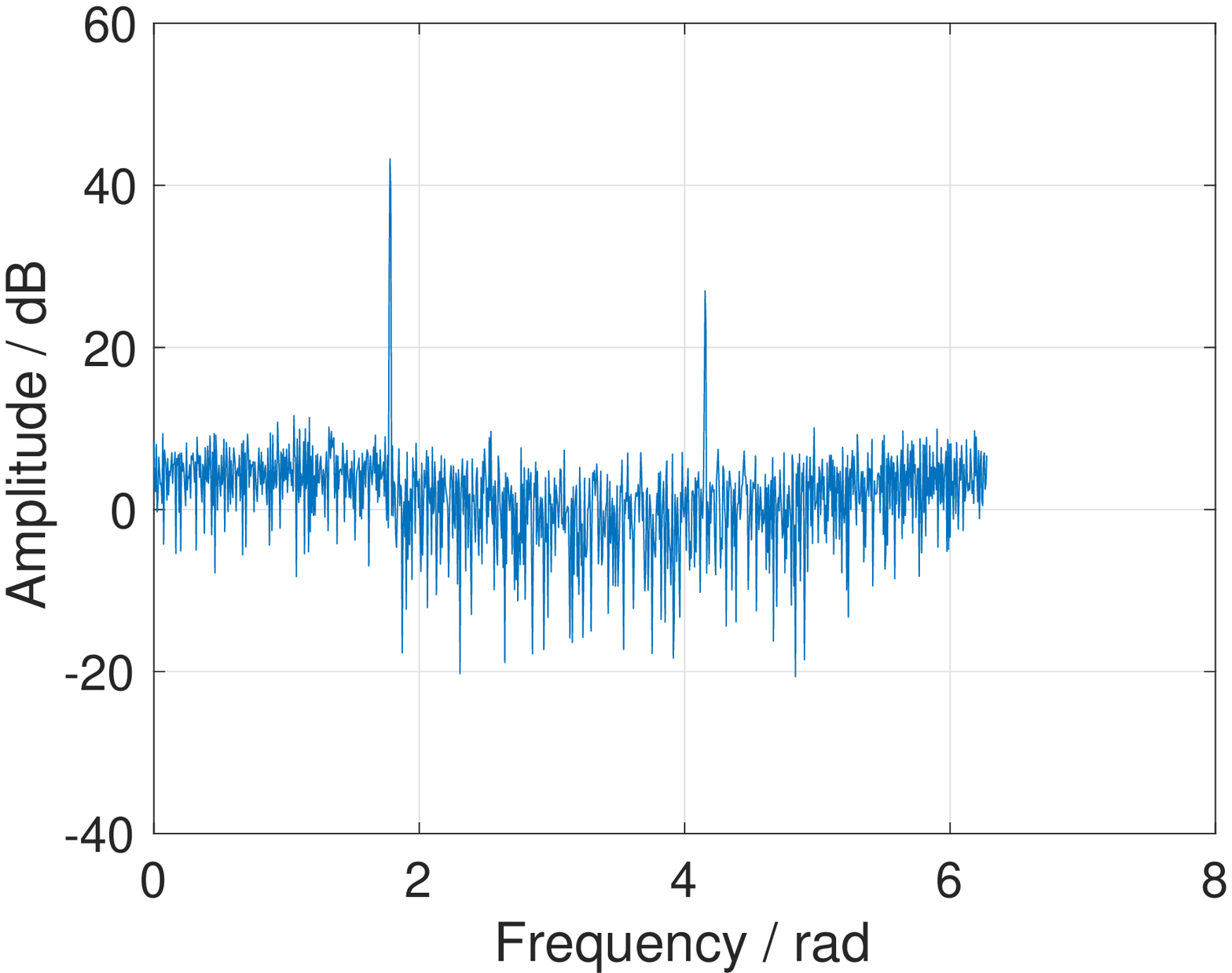} }}%
     \hfil
    \subfloat[]{{\includegraphics[scale=0.2]{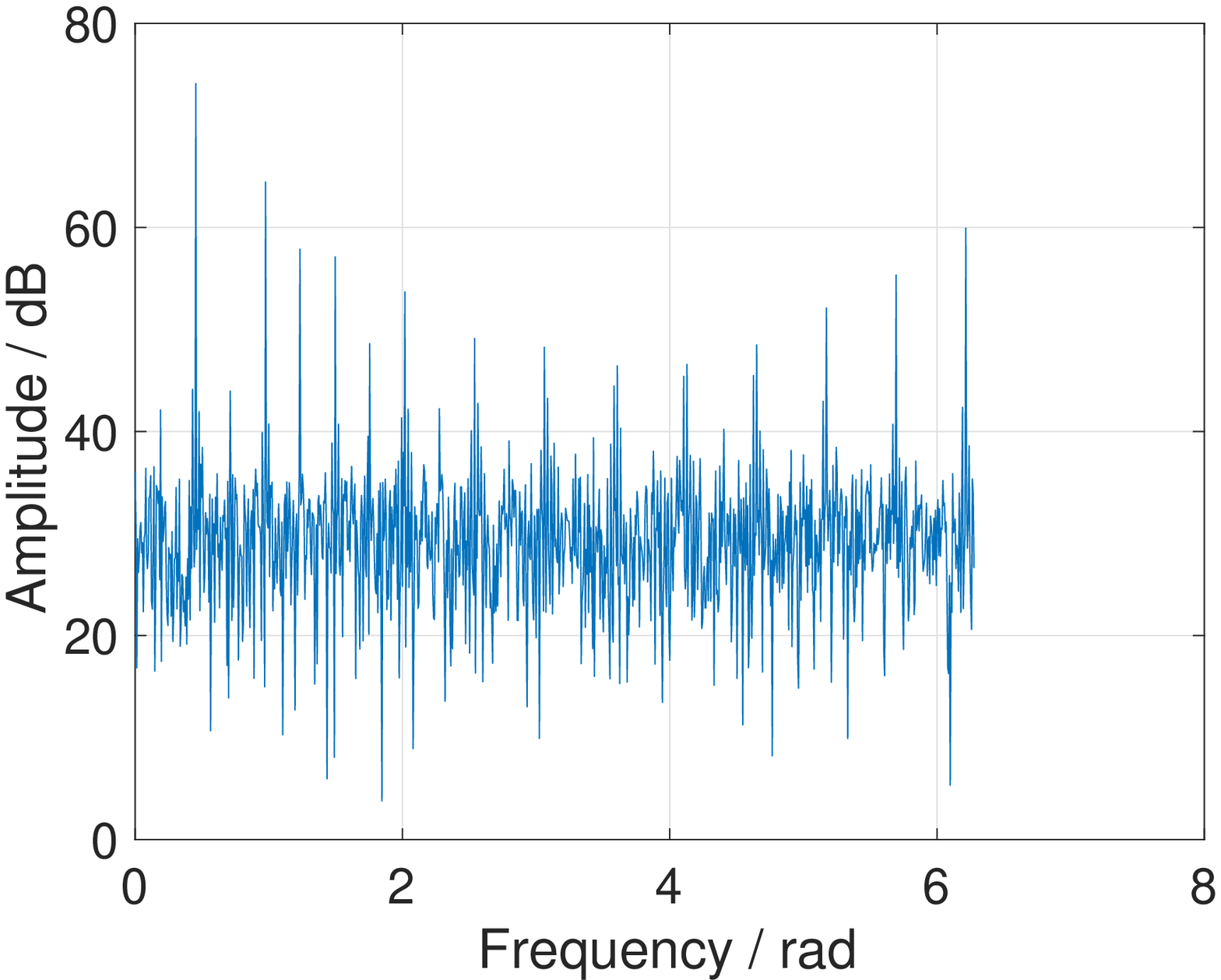} }}%
    \subfloat[]{{\includegraphics[scale=0.2]{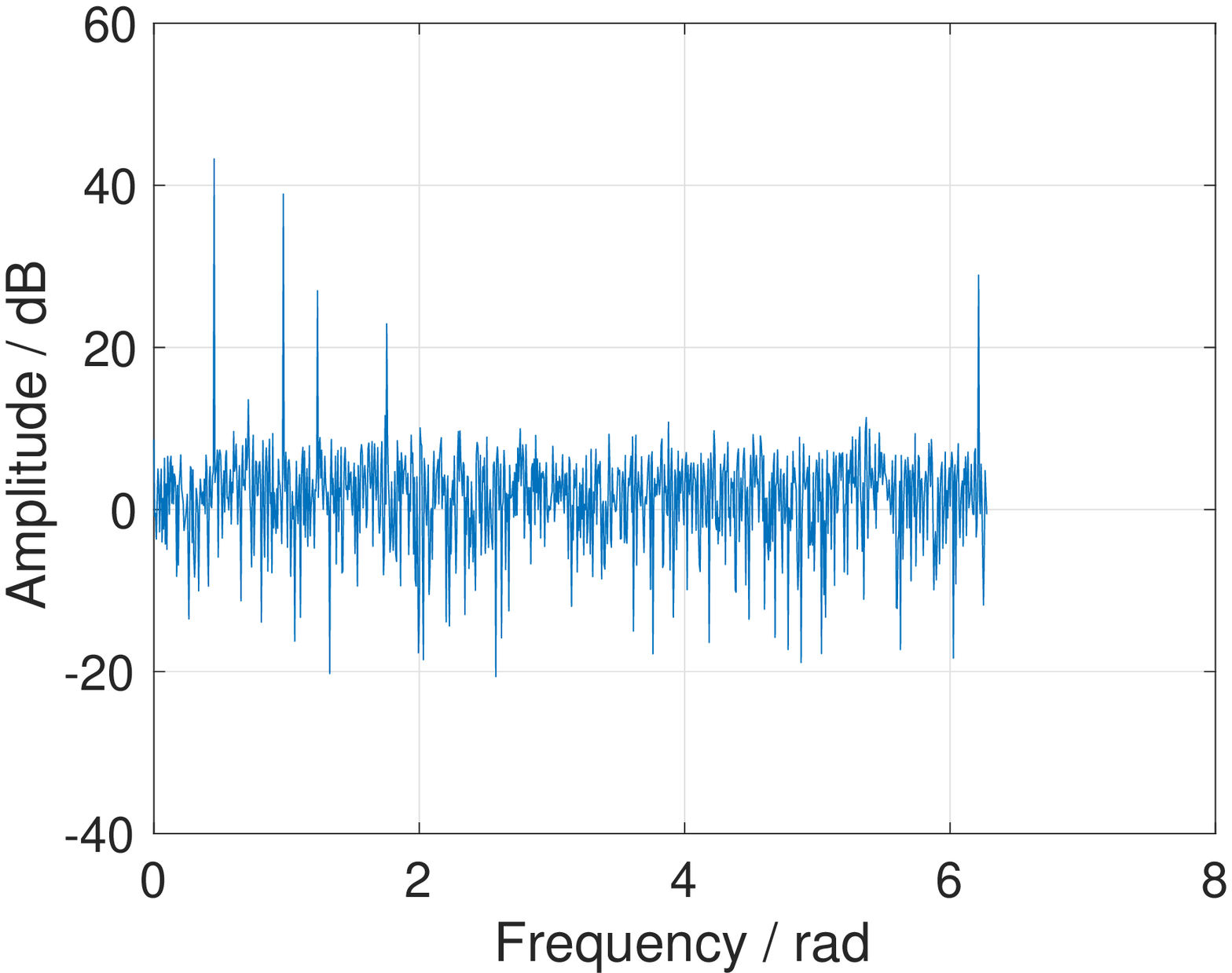} }}%
    \caption{\textbf{Effect of pre-permutation windowing.} The signal contains two significant frequency components, one of which is $15dB$ stronger than the other. A Dolph-Chebyshev window is applied to the time-domain signal. Windowed signal after permutation appears sparser in the frequency domain as compared to the permuted signal without windowing. (a) Spectrum of signal without windowing. (b) Spectrum of windowed signal. (c) Spectrum of signal without windowing after permutation. (d) Spectrum of windowed signal after permutation.}
    \label{fig:PreWinMotive}
\end{figure}

\subsection{Signal Detection Without Knowing the Exact Sparsity and Optimal Threshold Design}

In the SFT, detection of the significant frequencies is needed in two stages. With knowing the number of the significant frequencies and assuming they are all on-grid, the detection of the signal can be accomplished by finding the $K$ highest spectral amplitude values. 

In reality however, we usually do not have the knowledge of the exact sparsity, i.e., $K$. Moreover, even if we knew $K$, due to the leakage caused by the off-grid signals, the $K$ highest spectral peaks might not be the correct representation of the signal frequencies. Finally, the additive noise could generate false alarms, which would add more difficulties in signal detection.

In order to solve the detection problem, we propose to use NP detection in the  two stages of detection, which does not require knowing $K$. However, the optimal thresholds still rely on the number of significant frequencies as well as their SNR. In light of this, we provide an optimal design based on a bound for the sparsity and signal SNR level. The optimal threshold design is presented in Section \ref{sec:optimal}.

\subsection{High Dimensional Extensions} \label{sec:hd}
In the following, we elaborate on the high dimensional extension of the RSFT for its main stages.
\subsubsection{Windowing} In the pre-permutation windowing and the flat-windowing stages, the window for each dimension is designed separately. After that, the high dimension widow is generated by combining each 1-D window. For instance, in the 2-D case, assuming that $\mathbf{w}_x$ and $\mathbf{w}_y$ are the two windows in the $x$ and $y$ dimension, respectively, a 2D window can be computed as 
\begin{equation}
\mathbf{W}_{xy} = \mathbf{w}_x \mathbf{w}_y^H.
\end{equation} 

Fig. \ref{fig:2dwin} shows a compound 2-D window which is a combination of a 64-point and a 1024-point Dolph-Chebyshev window, both of which has $60dB$ attenuation for the side-lobes in frequency domain.  A Dolph-Chebyshev window allows us to trade off frequency resolution and side-lobe attenuation easly. Other windows sharing the same flexibility includes Gaussian window, Kaiser-Bessel window, Blackman-Harris window, etc., see \cite{harris1978use} for detail.  We apply those windows on the data by point-wise multiplications. 
 
 \begin{figure}[!t]
    \centering
    \subfloat[]{{\includegraphics[scale=0.22]{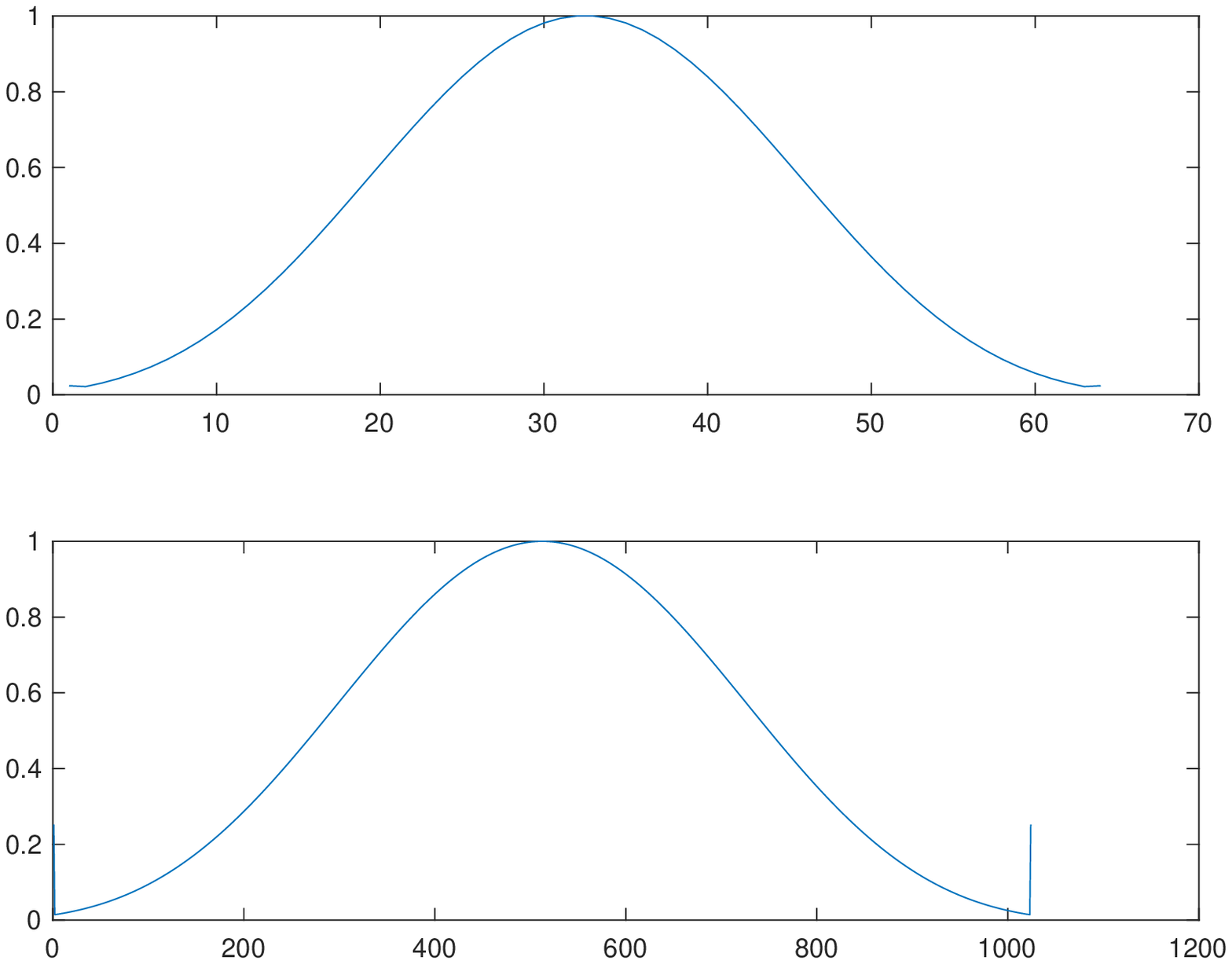} }}%
    \subfloat[]{{\includegraphics[scale=0.23]{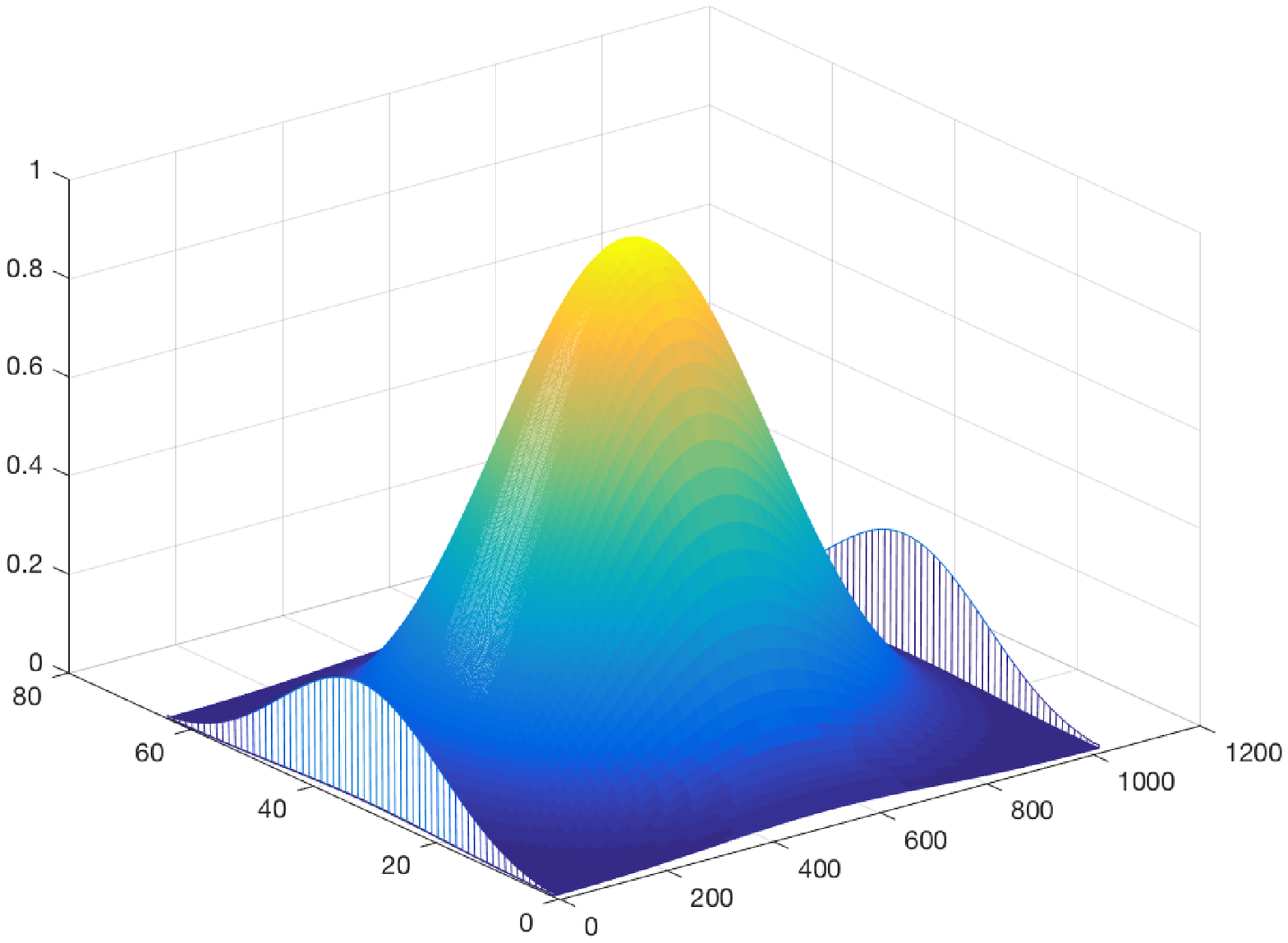} }}%
    \caption{\textbf{Compound Window in 2-D.} (a) Top: a $64$-points Dolph-Chebyshev window; bottom: a $1024$-points Dolph-Chebyshev window. (b) The  2D window.}
    \label{fig:2dwin}
\end{figure}

\subsubsection{Permutation} The permutation parameters are generated for each dimension in a random way according to  (\ref{eq:permutation}). Then, we carry the permutation on each dimension sequentially. An example for the 2-D case is illustrated in Fig. \ref{fig:Permutation}.

\begin{figure}[!t]
    \centering
    \subfloat[]{{\includegraphics[scale=0.31]{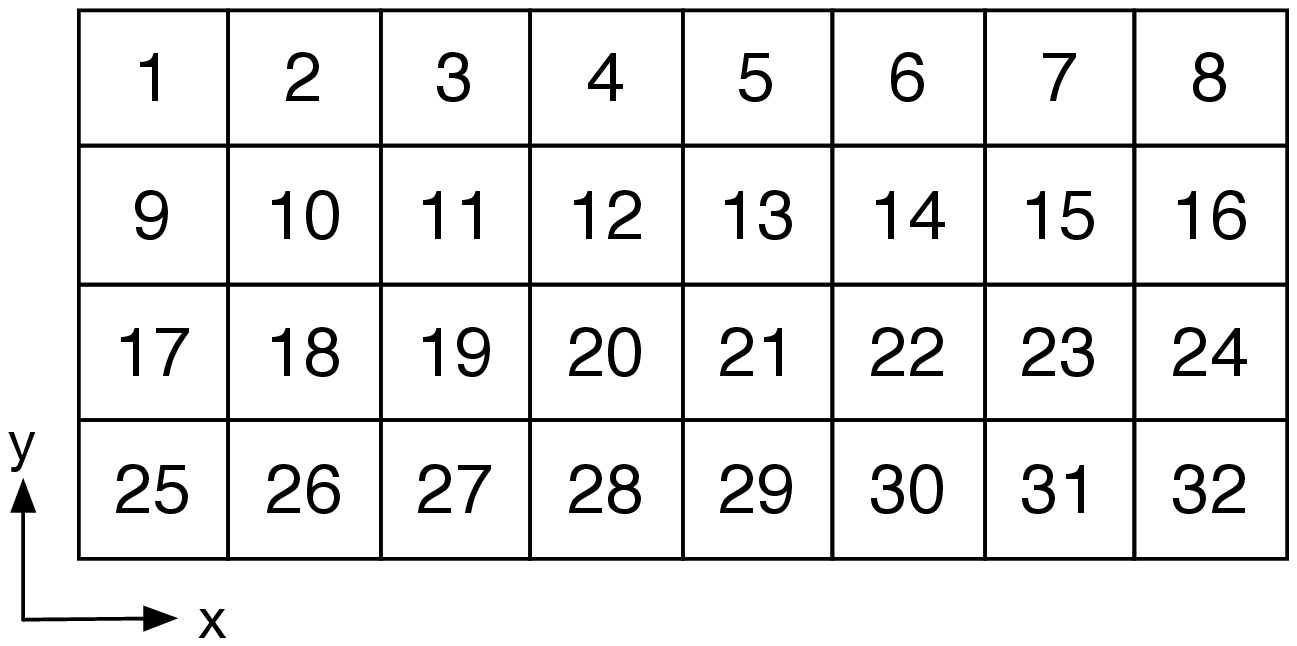} }}%
    \subfloat[]{{\includegraphics[scale=0.31]{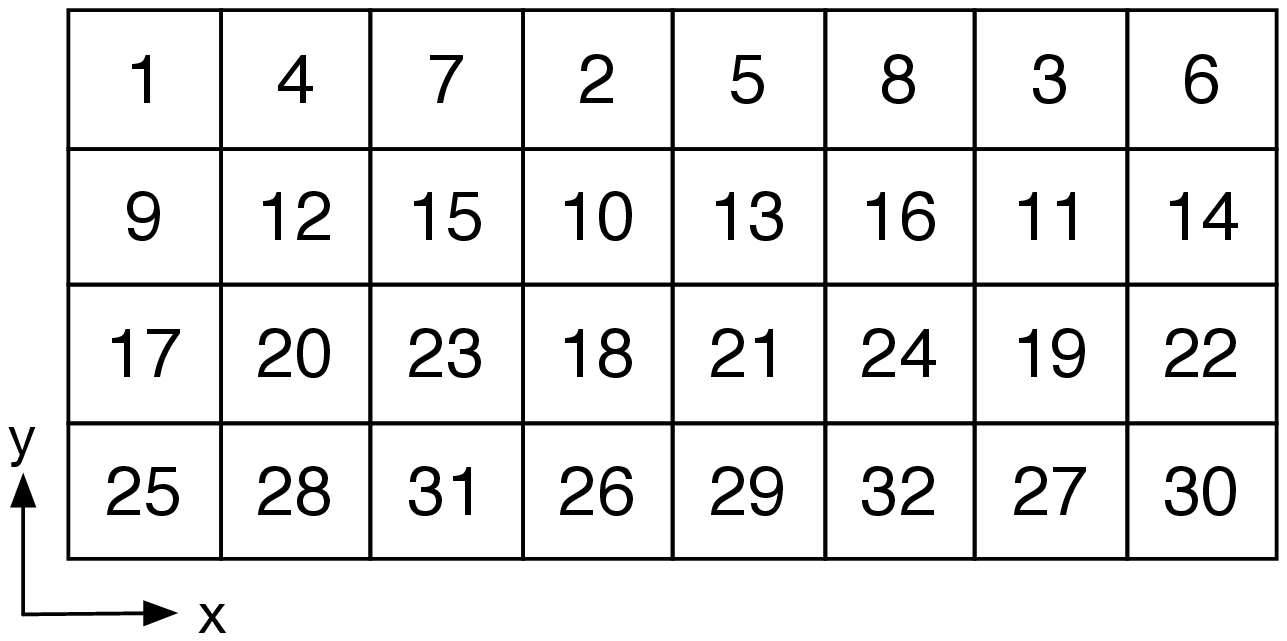} }}%
     \hfil
    \subfloat[]{{\includegraphics[scale=0.31]{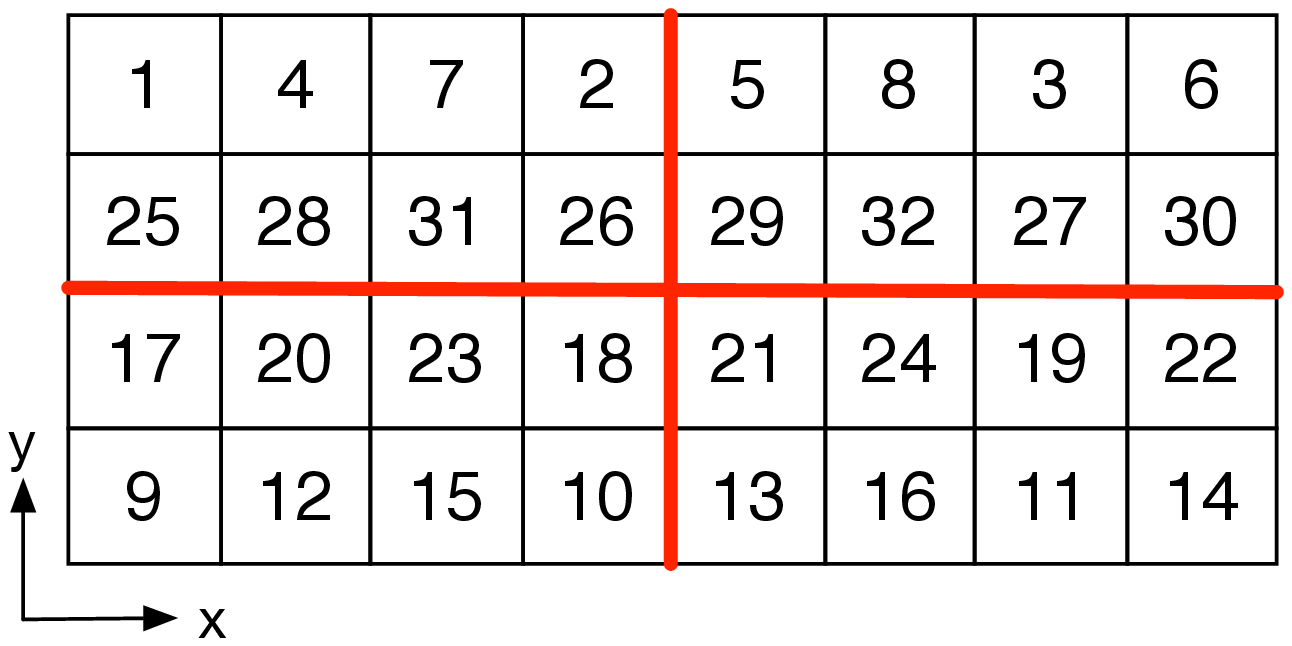} }}%
    \subfloat[]{{\includegraphics[scale=0.31]{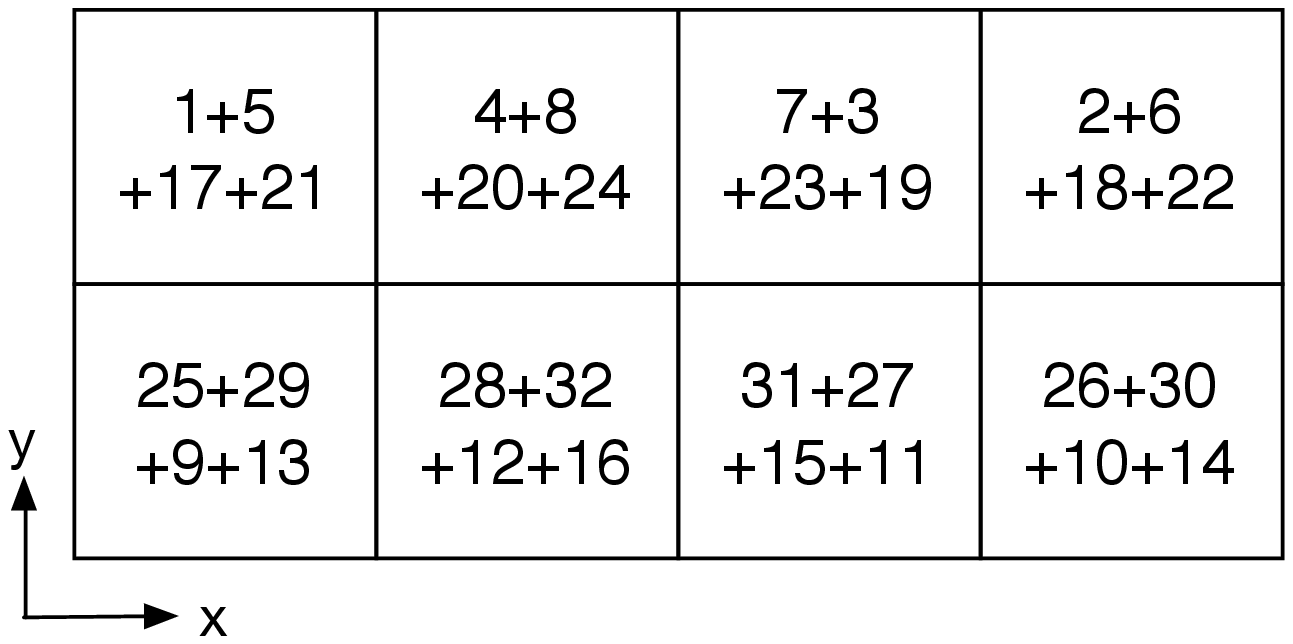} }}%
    \caption{\textbf{Permutation and Aliasing in 2D.} (a) Original 2D data forms a $4 \times 8$ matrix. (b) Permutation in $x-$dimension, $\sigma_x = 3, \tau_x = 0$. (c) Permutation in $y-$dimension, $\sigma_y = 3, \tau_y = 0$. After permutation, data is divided into four $2\times4$ sub-matrices. (d) Aliasing by adding sub-matrices from (c).}
    \label{fig:Permutation}
\end{figure}

\subsubsection{Aliasing} The aliasing stage compresses  the high dimensional data into much smaller size. In 2-D, as shown in Fig. \ref{fig:Permutation}, a periodic extension of the $N_x \times N_y$ data matrix is created with period $B_x$ in the $x$ dimension and $B_y$ in the $y$ dimension, with $B_x<<N_x$ and $B_y<<N_y$, and the basic period, i.e., $B_x \times B_y$ is extracted.

\subsubsection{First-stage-detection and reverse-mapping} We carry first stage detection after taking the square of magnitude of N-D FFT on the aliased data. Each data point is then compared with a pre-determined threshold, and for those passing the thresholds, their indices are reversed map to the original space. The combination of the reverse mapped indices from each dimension provides the tentative locations of the original frequency components. 


\subsubsection{Accumulation and second-stage-detection} The accumulation stage collects the tentative frequency locations found in the reverse mapping for each iteration, and the number of occurrences for each location is calculated after running over $T$ iterations. The second stage detection finds indices in each dimension for those whose number of occurrence pass the threshold of this stage.

Based on the above discussion, we summarize the RSFT  in Algorithm \ref{alg:RSFT}. Comparing to the original SFT, we add {pre-permutation windowing} process on the input data and incorporate NP detection in both stages of detection. 

Depending on the application, the iterations can be applied on the same input data with different permutations. Alternatively, each iteration could process different segments (see the signal model in (\ref{eq:sig_model})) of input data as indicated in Algorithm \ref{alg:RSFT}, and it can effectively reduce the variance of the estimation via RSFT,  provided that the signal is stationary.

The \emph{Estimation} procedure in the RSFT is similar to the original \emph{Estimation} procedure, except that the recovered Fourier coefficients should be divided by the spectrum of the pre-permutation window, so that the distortion due to pre-permutation windowing is compensated. Moreover, using \emph{Location} is sufficient for our radar application, for it can provide the location of significant frequencies, which is directly related to the parameters to be estimated. We also set $\tau = 0$ in each permutation, since the random phase rotation does not affect the performance of a detector after taking magnitude of the signal in the intermediate stage.

\begin{algorithm}
\caption{RSFT algorithm}\label{alg:RSFT}
\textbf{Input:} complex signal $\mathbf{r}$ in any fixed dimension \\
\textbf{Output:} $\mathbf{o}$, sparse frequency locations of input signal 

\begin{algorithmic}[1]
\Procedure{RSFT}{$\mathbf{r}$} 

\State Generate a set of $\sigma$ randomly for each dimension
\State $\bar{\mathbf{a}} \gets 0$
\For {$i \gets 0$ \textbf{to} $T$} 
	\State Pre-Permutation Windowing: $\mathbf{y} \gets \mathbf{W}\mathbf{r}$
	\State Permutation: $\mathbf{p}$ $\gets$   $P_{\sigma,0} \mathbf{y}$ 
	\State Flat-windowing: $\mathbf{z}$ $\gets$ $\overline{\mathbf{W}} \mathbf{p}$  
	\State Aliasing: $\mathbf{f} \gets  \mathop{\mathrm{Aliasing}}(\mathbf{z}$) 
	\State N-D FFT:  $\hat{\mathbf{f}}$ $\gets$ $\mathop{\mathrm{NDFFT}}(\mathbf{f})$
	\State First-stage-detection: $\mathbf{c} \gets  \mathop{\mathrm{NPdet1}}(|\hat{\mathbf{f}}|^2$)     
	\State Reverse-mapping:  $\mathbf{a}_i$ $\gets$  $\mathop{\mathrm{Reverse}}(\mathbf{c})$
	\State Accumulation: $\bar{\mathbf{a}} \gets \bar{\mathbf{a}}+\mathbf{a}_i$
\EndFor
\State Second-stage-detection: $\mathbf{o} \gets \mathop{\mathrm{NPdet2}}(\bar{\mathbf{a}})$  
\State \textbf{return} $\mathbf{o}$
\EndProcedure
\end{algorithmic}

\end{algorithm}

\section{Optimal Thresholds Design in the RSFT} \label{sec:optimal}
The main challenge in implementing the RSFT is to decide the thresholds in the two stages of NP detection. The applying of NP detection in the RSFT is not a straitforward extension on the SFT, in that the two stages are inter-connected, thus need to be jointly studied. In the following, we will show that the two asymptotically optimal thresholds are jointly founded with an optimization process.  The analysis is carried in 1-D, while the generalization to high dimension is straitforward. 
\subsection{Signal Model and Problem Formulation}
We model the signal in continuous time domain as a superposition of  $K$  sinusoids as well as additive white noise. We then sample the signal uniformly both in I and Q channels with a sampling frequency above the Nyqvist rate. Assume the total sampling time is divided into $T$ consecutive equal length segments, each of which contains $N$ samples, and $K<<N$ (i.e., the signal is sparse in frequency domain). 
Then for each time segment, i.e., $s\in[T]$, we have
\begin{equation} \label{eq:sig_model}
\mathbf{r}_s =  \sum_{i \in [K]} b_{i,s} \mathbf{v}(\omega_i) + \mathbf{n}_s,
\end{equation}
where $\mathbf{v}(\omega_i)$ denotes for the $i_{th}$ complex sinusoid, with $\omega_i \in[0,2 \pi)$ as its frequency, i.e.,
\begin{equation} \label{eq:sinusoid}
\mathbf{v}(\omega_i) = [1\quad e^{j\omega_i}\; \cdots\; e^{j(N-1)\omega_i}]^T.
\end{equation}
We further assume that $\omega_i$  is unknown deterministic quantity and is constant during the whole process, while the complex amplitude of the sinusoid, i.e., $b_{i,s}$ takes random value for each segment. More specifically, we model $b_{i,s}$ as a circularly symmetric Gaussian process with the distribution $b_{i,s} \sim \mathcal{CN}(0, \sigma_{bi}^2)$. Likewise, the noise $\mathbf{n}_s$ is distributed as $\mathbf{n}_s \sim \mathcal{CN}(\mathbf{0}, \sigma_{n}^2 \mathbf{I})$, where $\mathbf{0}$ is $N$-dimensional zero vector, and $\mathbf{I} \in \mathbb{R}^{N\times N}$ is unit matrix.  We also assume each sinusoid and the noise are uncorrelated. In addition, the neighboring sinusoids are resolvable in the frequency domain, i.e., the frequency spacing of neighboring sinusoids is greater than $\eta_m \frac{2\pi}{N}$, where $\eta_m$ is the 6.0-dB bandwidth\cite{harris1978use} of a window that applies on $\mathbf{r}_s$. Note the signal model in (\ref{eq:sig_model}) is also commonly used in array signal processing literature for Uniform Linear Array (ULA) settings (See e.g., \cite{van2002optimum}), in which the time samples in each segment is replaced by spatial samples from array elements, and a sample segment is referred as a time snapshot.

We want to detect and estimate each $\omega_i$ from the input signal. From a non-parametric and data-independent perspective, this is a classic \emph{spectral analysis} and detection problem that can be solved by any FFT-based spectrum estimation method, for example, the Bartlett method followed by a NP detection procedure (see Appendix \ref{app:Bartlett}). In that case, the FFT computes the signal spectrum on $N$ frequency bins, and the detection is carried on each bin to determine whether there exists a significant frequency.  In what follows, we define the detection and estimation problem related to the design of the RSFT.

Let $SNR_i = \sigma_{bi}^2 / \sigma_{n}^2$ be the SNR of the $i_{th}$ sinusoid. Let us define the  \emph{worst case SNR}, i.e., $SNR_{min}$, as
\begin{equation}
SNR_{min} \triangleq  \mathop{\mathrm{min}}_{i \in [K]} (SNR_i).
\end{equation}
 
Let $P_d$ denote for the probability of detection for the sinusoid with $SNR_{min}$, and $P_{fa}$ the corresponding probability of false alarm on each frequency bin.

\begin{problem} \label{prob:1}
For the signal defined in (\ref{eq:sig_model}), find the optimal thresholds of the first and the second stage of detection in an asymptotic sense, such that they minimize $SNR_{min}$ for given $P_d, P_{fa}, N, B,T, K$. Also, characterize the tradeoff between computational complexity and $SNR_{min}$ as a function of various parameters.
\end{problem}

In the following, we investigate the two stages of detection separately, then summarize the solution into an optimization problem. 

\subsection{First Stage Detection}
The first stage detection is performed on each data segment.  After pre-permutation windowing, permutation and flat-windowing, the input signal can be expressed as
\begin{equation} \label{eq:z}
\mathbf{z} = \overline{\mathbf{W}} \mathbf{P}_{\sigma_s} \mathbf{W} \mathbf{r},
\end{equation}
where $\sigma_s$ is the permutation parameter for the $s_{th}$ segment, which has an uniform random distribution; $\mathbf{P}_{\sigma_s}$ is the permutation matrix, which functions as  (\ref{eq:permutation})  with $\tau=0$;  $\mathbf{W} = \diag(\mathbf{w})$, $\overline{\mathbf{W}} = \diag(\bar{\mathbf{w}})$, where $\mathbf{w}$ and $\bar{\mathbf{w}}$ are pre-permutation window and flat-window, respectively.

Regarding the design of the flat-window, we take its frequency domain main-lobe to have width $2\pi/B$, and choose its length in time domain as $N$. As indicated in \cite{Hassanieh:2012:SPA:2095116.2095209}, it is possible to use less data in flat-windowing by choosing the length of $\bar{\mathbf{w}}$ less than $N$, i.e., dropping some samples in each segment after the permutation. However, a reduced length window in the time domain would result in longer transition regions in the frequency domain and as a result the detection performance of the system would degrade, since a larger transition region would allow more noise to enter the estimation process.

The time domain aliasing can be described as
\begin{equation} \label{eq:l}
\mathbf{f} =  \sum_{i \in [L]} \left( \overline{\mathbf{W}}_i \mathbf{P}_{\sigma_s} \mathbf{W} \mathbf{r} \right) = \mathbf{V}_{\sigma_s} \mathbf{r},
\end{equation}
where $L=N/B$; $\overline{\mathbf{W}}_{i}$ is the $i_{th}$ sub-matrix  of $\overline{\mathbf{W}}$, which is comprised of the $iB_{th}$ to the $((i+1)B-1)_{th}$ rows of $\overline{\mathbf{W}}$. And $\mathbf{V}_{\sigma_s}  =   \sum_{i \in [L]}\overline{\mathbf{W}}_i  \mathbf{P}_{\sigma_s} \mathbf{W}$.

The FFT operation on the aliased data $\mathbf{f}$ can be expressed as
\begin{equation} \label{eq:f}
\hat{\mathbf{f}} = \mathbf{D} \mathbf{V}_{\sigma_s} \mathbf{r},
\end{equation}
where $\mathbf{D}\in \mathbb{C}^{B \times B}$ is the DFT matrix.   For the $k_{th}$ entry of $\hat{\mathbf{f}}$, we have
\begin{equation} \label{eq:f_k}
[\hat{\mathbf{f}}]_k = \mathbf{u}^H_k \mathbf{V}_{\sigma_s} \mathbf{r}, \; k \in [B],
\end{equation}
where  $\mathbf{u}_k$ is the $k_{th}$ column of $\mathbf{D}$, i.e., $\mathbf{u}_k = [1\quad e^{j k \Delta \omega_B}\; \cdots\; e^{jk(B-1)\Delta \omega_B}]^T$, and $\Delta \omega_B = 2\pi/B$.

Substituting  \eqref{eq:sig_model} into (\ref{eq:f_k}), and taking the $m_{th}$ sinusoid,  which we assume is the weakest sinusoid with its SNR equals to $SNR_{min}$, out of the summation
\begin{equation} \label{eq:f_km}
\begin{split}
[\hat{\mathbf{f}}]_k &= b_m \mathbf{u}^H_k \mathbf{V}_{\sigma_s} \mathbf{v} (\omega_m) \\
&+ \sum_{j \in [K]\setminus m} \left( b_j \mathbf{u}^H_k \mathbf{V}_{\sigma_s} \mathbf{v} (\omega_j) \right)\\
&+ \mathbf{u}^H_k \mathbf{V}_{\sigma_s} \mathbf{n}.
\end{split}
\end{equation} Since $[\hat{\mathbf{f}}]_k$ is a linear combination of $b_i, [\mathbf{n}]_j, i\in[K], j\in[N]$, it holds that
 \begin{equation} \label{eq:dist_f_k}
[\hat{\mathbf{f}}]_k \sim \mathcal{CN} (0, \sigma_{fk}^2), 
\end{equation}
where 
\begin{equation} \label{eq:var_fk}
\begin{split}
\sigma_{fk}^2 &= \sigma_{bm}^2 \alpha(k, \sigma_s, \omega_m) \\
&+ \sum_{j \in [K]\setminus m} \left(\sigma_{bj}^2 \alpha(k, \sigma_s, \omega_j)\right) + \sigma_n^2 \beta(\sigma_s) ,
\end{split}
\end{equation}
and
\begin{equation} \label{eq:alpha_beta}
\begin{split}
&\alpha(k, \sigma_s, \omega) = |\mathbf{u}^H_k \mathbf{V}_{\sigma_s} \mathbf{v} (\omega)|^2 \\
&\beta(\sigma_s) =  \|\overline{\mathbf{W}} \mathbf{P}_{\sigma_s} \mathbf{w} \|^2 .
\end{split}
\end{equation} 
It is easy to see that $\sigma_{fk}^2$ is summation of weighted variance from each signal and noise component. 

We now investigate the $p_{th}$ bin, where $\omega_m$ is mapped to, and we have the following claim. 

\begin{claim}
For a complex sinusoid signal, i.e., $\mathbf{v}(\omega)$, after pre-permutation windowing, permutation with $\sigma_s$, flat windowing,  aliasing and FFT, the highest amplitude of signal spectrum appears in $[B]$ at location
\begin{equation} \label{eq:lemma13}
p(\omega, \sigma_s) = \lfloor {\frac{B}{N} [\sigma_s \lfloor \frac{\omega}{\Delta \omega_N}  \rfloor]_N} \rfloor .
\end{equation}
where $\Delta \omega_N = 2\pi/N$.
\end{claim}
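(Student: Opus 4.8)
The plan is to track the location of the spectral peak of a single windowed sinusoid through the four linear stages---permutation, flat-windowing, aliasing, and the length-$B$ FFT---and to recognize that the composite map coincides with the Mapping of Definition~\ref{def:map} evaluated at the gridded frequency $\tilde{q} := \lfloor \omega/\Delta\omega_N \rfloor$. Concretely, I would establish
\begin{equation}
p(\omega,\sigma_s) = \mathcal{M}\!\left( \lfloor \omega/\Delta\omega_N \rfloor,\, \sigma_s \right),
\end{equation}
after which the stated formula follows at once by unfolding the definition of $\mathcal{M}$. The two workhorses are Property~\ref{prop:permutation} and Property~\ref{prop:aliasing}, which convert the time-domain operations into clean index bookkeeping in the frequency domain.

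First I would handle the pre-permutation windowing and the permutation. Multiplying $\mathbf{v}(\omega)$ by the pre-permutation window $\mathbf{W}$ convolves its spectrum with $\hat{\mathbf{w}}$; since $\omega$ is off-grid, the largest spectral sample on the grid $\{k\,\Delta\omega_N\}$ is attained at the bin $\tilde{q} = \lfloor \omega/\Delta\omega_N \rfloor$, so after windowing the signal behaves, as far as peak location is concerned, like a pure tone on bin $\tilde{q}$. Applying $P_{\sigma_s,0}$ and invoking Property~\ref{prop:permutation} with $\tau=0$, the peak is relocated from bin $\tilde{q}$ to bin $[\sigma_s\,\tilde{q}]_N$, with its magnitude unchanged because the map $i\mapsto[\sigma_s i]_N$ is a bijection on $[N]$.

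Next I would push the peak through flat-windowing and aliasing. The flat-window $\overline{\mathbf{W}}$ is designed to have a frequency main-lobe of width exactly $2\pi/B$, i.e.\ $L = N/B$ bins; convolving with it spreads the single peak at $[\sigma_s\,\tilde{q}]_N$ into a (nearly) boxcar of width $L$ centered there, without moving its center. By Property~\ref{prop:aliasing}, the subsequent aliasing followed by the length-$B$ FFT downsamples the $N$-point spectrum by the factor $L$, retaining bins $0,L,2L,\dots$. Because the boxcar width is matched to $L$, exactly one retained sample falls inside it, and that sample sits at output bin $\lfloor \tfrac{B}{N}[\sigma_s\,\tilde{q}]_N \rfloor$. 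Composing the two relocations yields $p(\omega,\sigma_s)=\lfloor \tfrac{B}{N}[\sigma_s\,\tilde{q}]_N \rfloor = \mathcal{M}(\tilde{q},\sigma_s)$; that this bin carries the \emph{highest} amplitude follows since windowing concentrates essentially all of the tone's energy in its main lobe, permutation preserves magnitudes, and a single tone produces no competing lobe for the aliasing to fold onto another bucket.

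The hard part will be the flat-windowing/aliasing step, and in particular pinning down the floor. Two facts must be made precise: that the main-lobe width $2\pi/B$ guarantees the downsampling grid $\{jL\}$ meets the boxcar in one and only one point, and that this unique point is the one indexed by the \emph{floor} $\lfloor \tfrac{B}{N}[\sigma_s\tilde{q}]_N\rfloor$ rather than the ceiling. This is where the design choice tying the flat-window main-lobe width to the aliasing factor $L$ is indispensable, and where one must be careful at boundary cases when $[\sigma_s\tilde{q}]_N$ is near a multiple of $L$; a clean treatment fixes the boxcar support to be the half-open bucket $[\lfloor[\sigma_s\tilde{q}]_N/L\rfloor L,\,(\lfloor[\sigma_s\tilde{q}]_N/L\rfloor+1)L)$, so that the floor convention is consistent with the bucketing used throughout. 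The gridding assumption in the first step---that the windowed peak sits at $\lfloor\omega/\Delta\omega_N\rfloor$---is the modeling convention for the \emph{grided location} of an off-grid tone, and I would state it explicitly rather than attempt to derive it.
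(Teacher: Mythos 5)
Your proposal is correct and follows essentially the same route as the paper's own proof: the peak sits at the nearest grid point $\lfloor\omega/\Delta\omega_N\rfloor$ after symmetric windowing, is modularly dilated to $[\sigma_s\lfloor\omega/\Delta\omega_N\rfloor]_N$ by the permutation (Property~\ref{prop:permutation}), and is then carried to bucket $\lfloor\frac{B}{N}[\sigma_s\lfloor\omega/\Delta\omega_N\rfloor]_N\rfloor$ by the flat-windowing/aliasing, viewed as ideal downsampling (Property~\ref{prop:aliasing}). Your added care about the boxcar width matching $L=N/B$ and the floor/half-open-bucket convention, and the identification of the composite with $\mathcal{M}(\lfloor\omega/\Delta\omega_N\rfloor,\sigma_s)$, only make explicit what the paper leaves to its figure.
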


\begin{proof}
If we were applying DFT to $\mathbf{v}(\omega)$, the highest amplitude of the spectrum would appear on the grid point closest to $\omega$, i.e. $\lfloor \frac{\omega}{\Delta \omega_N}  \rfloor$. The pre-permutation windowing will not change the position of the highest peak, provided the window is symmetric. Then after permutation, the peak location dilates by $\sigma_s$ modularly, and becomes $[\sigma_s \lfloor \frac{\omega}{\Delta \omega_N}  \rfloor]_N$. Finally, after flat-windowing and aliasing, the signal is ideally downsampled in the frequency domain, and the data length changes from $N$ to $B$. Then the $B$-point DFT exhibits the highest peak on grid point $p$ as desired. A visualization of this process is shown in Fig. \ref{fig:Win_Permutation}.
\end{proof}

\begin{figure}[!t]
	\begin{center}
	\includegraphics[scale=0.82]{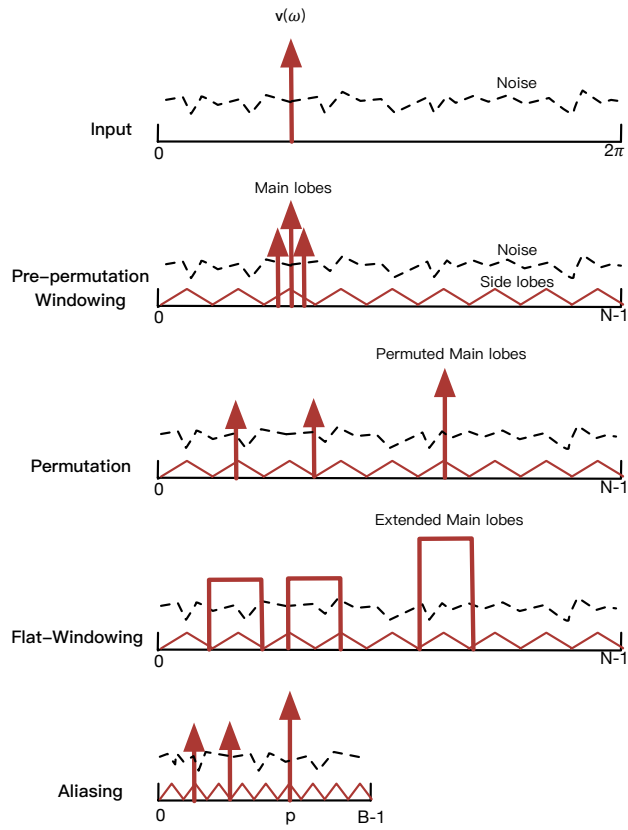} 
	\caption{\textbf{Windowing, permutation and aliasing.} The frequency representation of the signal from pre-permutation windowing to aliasing is presented. Only one significant frequency is shown for conciseness.}
	\label{fig:Win_Permutation}
	\end{center}
\end{figure} 

On assuming that only  $\omega_m$  maps to bin $p$, and that the side-lobes (leakage)  are  far below the noise level (owning to the two stages of windowing, which attenuate the leakage down to a desired level), the effect of leakage from other sinusoids can be ignored. Then we can approximate  the variance of $[\hat{\mathbf{f}}]_p$ as
\begin{equation} \label{eq:var_fp}
\begin{split}
\sigma_{fp}^2 &\approx \sigma_{bm}^2 \alpha(p, \sigma_s, \omega_m) + \sigma_n^2 \beta(\sigma_s) .
\end{split}
\end{equation}
In case that multiple frequencies are mapped to the same bin (collision), (\ref{eq:var_fp}) gives a underestimate of the variance. The probability of a collision occurring reduces as $K<<B$. 

The bin $u \in [B]$, for which no significant frequency is mapped to, contains only noise, and the corresponding variance for $[\hat{\mathbf{f}}]_u$  is
\begin{equation} \label{eq:var_fu}
\begin{split}
\sigma_{fu}^2 &\approx  \sigma_n^2 \beta(\sigma_s) .
\end{split}
\end{equation}

Hence, the hypothesis test for the first-stage-detection on $[\hat{\mathbf{f}}]_j, j\in[B]$  is formulated as
\begin{itemize}
\item $\mathcal{H}0$: no significant frequency is mapped to it. 
\item $\mathcal{H}1$: at least one significant frequency is mapped to it, with worst case SNR equals to $SNR_{min}$. 
\end{itemize}

The log likelihood ratio test (LLRT) is
\begin{equation} \label{eq:LRT1}
\log {P_{f_j|H1}(x) \over P_{f_j|H0}(x)} \mathop{\gtrless}_{\mathcal{H}0}^{\mathcal{H}1} \gamma'.
\end{equation}
where $P_{f_j|H1}(x)$ and $P_{f_j|H0}(x)$ are the probability density function (PDF) of $[\hat{\mathbf{f}}]_j$ under $\mathcal{H}1$ and $\mathcal{H}0$ respectively, and $\gamma'$ is a threshold.


Substituting the PDF of $[\hat{\mathbf{f}}]_j$ under both hypothesis into (\ref{eq:LRT1}), and after some manipulations we get
\begin{equation}
 |[\hat{\mathbf{f}}]_j|^2 \mathop{\gtrless}_{\mathcal{H}0}^{\mathcal{H}1} {{\gamma' } - { {\log {\sigma^2_{f_u} \over \sigma^2_{f_p}}}}\over{ {1 \over \sigma^2_{f_u}} - {1 \over \sigma^2_{f_p}}} }.
\end{equation}

Hence, $|[\hat{\mathbf{f}}]_j|^2$ is a sufficient statistics for first stage detection. Since $[\hat{\mathbf{f}}]_j$ has circularly symmetric Gaussian distribution, $|[\hat{\mathbf{f}}]_j|^2$ is exponentially distributed with cumulative distribution function (CDF) 
\begin{equation} \label{eq:exp}
F_{|[\hat{\mathbf{f}}]_j|^2}(x, \zeta^2) =\begin{cases}
               1 - e^{-{x \over \zeta^2}}, \; x \ge 0 \\
               0, \; x < 0 \;,
            \end{cases}
\end{equation}
where $\zeta^2$ equals to $\sigma^2_{f_u}$ under ${\mathcal{H}0}$ and  $\sigma^2_{f_p}$ under  ${\mathcal{H}1}$.

Based on (\ref{eq:exp}), in the first stage of detection, the false alarm rate on each of $B$ bins and the probability of detection of the weakest sinusoid  can be derived to be equal to
\begin{equation} \label{eq:tpd}
 \begin{split}
& \tilde{P}_{fa}(\sigma_s)  = e^{-{\gamma \over \sigma^2_n \beta(\sigma_s)}}, \\
& \tilde{P}_d(\omega_m,\sigma_s) = \tilde{P}_{fa}^{\beta(\sigma_s) \over \alpha(p, \omega_m, \sigma_s) SNR_{min} + \beta(\sigma_s)},
\end{split}
\end{equation}
where $\gamma$ is the detection threshold.  Both $\tilde{P}_{fa}$ and $\tilde{P}_{d}$  depend on the permutation $\sigma_s$. Taking expectation with respect to $\sigma_s$, we have
\begin{equation} \label{eq:bpd}
 \begin{split}
& \bar{P}_{fa} = e^{-{\gamma \over \sigma^2_n \bar{\beta}}}, \\
& \bar{P}_d(\omega_m) = \bar{P}_{fa}^{ \bar{\beta} \over  \bar{\alpha}(p,\omega_m) SNR_{min} +  \bar{\beta}},
\end{split}
\end{equation}
where $\bar{P}_d(\omega_m) = \mathbb{E} \{\tilde{P}_{d}(\sigma_s, \omega_m)\}$, $\bar{P}_{fa} = \mathbb{E} \{\tilde{P}_{fa}(\sigma_s)\}$, $ \bar{\alpha}(p, \omega_m) = \mathbb{E} \{\tilde{\alpha}(p,\omega_m,\sigma_s) \}$, and $ \bar{\beta} = \mathbb{E} \{\tilde{\beta}(\sigma_s) \}$.

\subsection{Second Stage Detection}
Let $\mathbf{c}_{\sigma_s} \in \{0,1\}^B$ denote the output of the first-stage-detection for the $s_{th}$ segment, with permutation factor $\sigma_s$. Each entry in $\mathbf{c}_{\sigma_s}$  is a Bernoulli random variable, i.e., for $j \in [B]$,
\begin{equation} 
{[\mathbf{c}_{\sigma_s}]_j} \sim \begin{cases}
               \mathrm{Bernoulli}\left(\tilde{P}_{fa}(\sigma_s)\right),  under\; \mathcal{H}0 ,\\
               \mathrm{Bernoulli}\left(\tilde{P}_{d}(\omega_m,  \sigma_s)\right),  under\; \mathcal{H}1.
            \end{cases}
\end{equation}
Note that under $\mathcal{H}1$, we assume that $[\mathbf{c}_{\sigma_s}]_j$ corresponds to the weakest sinusoid. For the other $K-1$ co-existing sinusoids, since their SNR may be grater than $SNR_{min}$, their probability of detection may also be grater than $\tilde{P}_{d}(\omega_m,  \sigma_s)$ (see Claim \ref{cl:aiH0}).

The reverse-mapping stage hashes the  $B$-dimensional   $\mathbf{c}_{\sigma_s}$ back to the $N$-dimensional $\mathbf{a}_{\sigma_s}$. According to Definition \ref{def:rev_map}, it holds that
\begin{equation} 
[\mathbf{a}_{\sigma_s}]_i = [\mathbf{c}_{\sigma_s}]_j, \; i\in[N], j\in[B], i \in \mathcal{R}(j, \sigma_s^{-1}).
\end{equation}

After accumulation of $T$ iterations, each entry in the accumulated output is summation of $T$ Bernoulli variables with different success rate. Define $\bar{\mathbf{a}}$ as the accumulated output, then for its $i_{th}, i\in[N]$ entry, we have
\begin{equation} \label{eq:a_i}
[\bar{\mathbf{a}}]_i =\sum_{s\in[T]} [\mathbf{a}_{\sigma_s}]_i = \sum_{i \in \mathcal{R}(j, \sigma_s^{-1}), s \in [T]} [\mathbf{c}_{\sigma_s}]_j .
\end{equation}
Note that in (\ref{eq:a_i}), each term inside the sum corresponds to a different segment, i.e., $[\mathbf{c}_{\sigma_s}]_{j}$ is from the $s_{th}$ segment. Since $\sigma_s$ is drawn randomly for each segment, $j$ may take different values, and relates to $i$ via a reverse-mapping. Fig. \ref{fig:Map_RevMap} gives a graphical illustration of the mapping and reverse-mapping.

Now, the hypothesis test for the second-stage-detection on $[\bar{\mathbf{a}}]_i, i\in[N]$ is formulated as
\begin{itemize}
\item $\overline{\mathcal{H}}0$: no significant frequency exists.
\item $\overline{\mathcal{H}}1$: there exists a significant frequency, whose SNR is at least $SNR_{min}$.
\end{itemize}

In the following, we investigate the statistics of $[\bar{\mathbf{a}}]_i$ under both hypothesis in an asymptotic senses. Before that however, we will take a closer look at the mapping and the reverse mapping by providing the following properties.

\begin{figure}[!t]
	\begin{center}
	\includegraphics[scale=0.82]{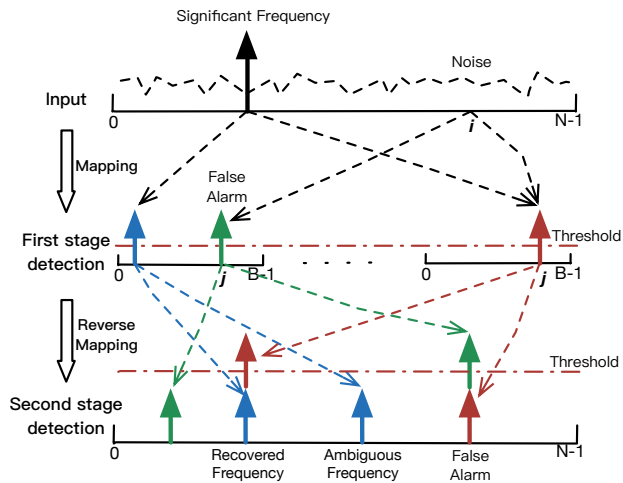} 
	\caption{\textbf{Mapping and Reverse Mapping.} A significant frequency may map to different locations in different iterations of the first stage detection, due to the different permutations. The detected frequencies, including the false alarms in the first stage,  are reverse mapped to the original dimension. The true location of the significant frequency is recovered, and the ambiguous frequencies are also generated.  The occurrence on the true location grows steadily during accumulation, provided the SNR is high enough, and thus the true location can be recovered in the second stage of detection with proper thresholding. However, false alarms may also occur in the second stage detection, due to  both ambiguous frequencies and false alarms from the first stage of detection.}
	\label{fig:Map_RevMap}
	\end{center}
\end{figure} 

\begin{property} \label{prop:mapping_reversbility}
\textbf{(Reversibility):} Let $j \in [B], i, \sigma, \sigma^{-1} \in [N]$. $\sigma$ and $\sigma^{-1}$ satisfy Eq. (\ref{eq:mod_inv}). If $j  = \mathcal{M}(i, \sigma)$, then it holds that
\begin{equation} 
i \in \mathcal{R}(j, \sigma^{-1}) .
\end{equation} 
\end{property}

\begin{property} \label{prop:mapping_exculusiveness}
\textbf{(Distinctiveness):} Let $i,j \in [B], i\neq j$. If $\sigma^{-1} \in [N]$ and satisfies Eq. (\ref{eq:mod_inv}),  then it holds that
\begin{equation} 
\mathcal{R}(i, \sigma^{-1}) \cap \mathcal{R}(j, \sigma^{-1}) = \emptyset.
\end{equation} 
\end{property}

The proofs of these properties are provided in Appendix \ref{app:map}. The two properties simply reveal the following facts: a mapped location can be recovered by reverse mapping (with ambiguities). Also, when applying reverse mapping to two distinct locations with the same permutation parameter, the resulting locations are also distinct.

Under $\overline{\mathcal{H}}1$, assuming that  $[\bar{\mathbf{a}}]_i$ corresponds to the $m_{th}$ sinusoid, i.e., the weakest sinusoid, then each term inside the sum of (\ref{eq:a_i}) has distribution  $[\mathbf{c}_{\sigma_s}]_j \sim \mathrm{Bernoulli}\left(\tilde{P}_{d}(\omega_m,  \sigma_s)\right), s\in[T]$.  Then we present the following claim.

\begin{claim} \label{cl:aiH1}
Under $\overline{\mathcal{H}}1$, and as $T \to \infty$, 
\begin{equation} \label{eq:aiH1}
[\bar{\mathbf{a}}]_i \sim N(\mu_{a1}(\omega_m), \sigma^2_{a1}(\omega_m)),
\end{equation}
where $\mu_{a1}(\omega_m) =  T\bar{P}_d(\omega_m)$,  $\sigma^2_{a1}(\omega_m) \le T\bar{P}_d(\omega_m)(1-\bar{P}_d(\omega_m))$. 
\end{claim}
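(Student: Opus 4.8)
The plan is to recognize $[\bar{\mathbf{a}}]_i$ from \eqref{eq:a_i} as a sum of $T$ independent Bernoulli trials and then invoke a central limit argument for independent, non-identically distributed summands. Under $\overline{\mathcal{H}}1$ the index $i$ is the true grided location of the weakest sinusoid, so by the reversibility property (Property \ref{prop:mapping_reversbility}) the unique bin contributing in segment $s$ is $j=\mathcal{M}(i,\sigma_s)=p(\omega_m,\sigma_s)$, and the corresponding term $X_s \triangleq [\mathbf{c}_{\sigma_s}]_{j}$ is, conditioned on the drawn permutation $\sigma_s$, a Bernoulli variable with success probability $\tilde{P}_d(\omega_m,\sigma_s)$ given in \eqref{eq:tpd}. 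First I would argue independence across $s$: the permutations $\sigma_s$ are drawn independently and the amplitudes $b_{i,s}$ and noise $\mathbf{n}_s$ are independent across segments, so the $\{X_s\}_{s\in[T]}$ are mutually independent. The point of departure from an ordinary i.i.d.\ sum is that the success rates $p_s \triangleq \tilde{P}_d(\omega_m,\sigma_s)$ are \emph{heterogeneous} (and random through $\sigma_s$), so $[\bar{\mathbf{a}}]_i$ follows a Poisson--binomial law.

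For the mean, I would condition on the realized permutation sequence and use the tower property: $\mathbb{E}\{X_s\}=\mathbb{E}_{\sigma_s}\{\tilde{P}_d(\omega_m,\sigma_s)\}=\bar{P}_d(\omega_m)$ from \eqref{eq:bpd}, hence $\mu_{a1}(\omega_m)=\sum_{s\in[T]}\mathbb{E}\{X_s\}=T\bar{P}_d(\omega_m)$. Equivalently, the empirical rate $\frac{1}{T}\sum_s p_s \to \bar{P}_d(\omega_m)$ by the law of large numbers, which is what makes the asserted mean asymptotically exact.

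The variance is where the inequality enters. Conditioned on the permutations the $X_s$ are independent, so $\sigma^2_{a1}(\omega_m)=\sum_{s\in[T]} p_s(1-p_s)$. Writing $\bar{p}=\frac{1}{T}\sum_s p_s$, concavity of $t\mapsto t(1-t)$ (Jensen's inequality, equivalently the nonnegativity of the empirical variance of the $p_s$) gives $\frac{1}{T}\sum_s p_s(1-p_s)\le \bar{p}(1-\bar{p})$, and since $\bar{p}\to\bar{P}_d(\omega_m)$ as $T\to\infty$ this yields $\sigma^2_{a1}(\omega_m)\le T\bar{P}_d(\omega_m)(1-\bar{P}_d(\omega_m))$, with equality only when all per-segment rates coincide. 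This is precisely the bound in the claim; the gap between this conditional (within-permutation) variance and the unconditional value $T\bar{P}_d(1-\bar{P}_d)$ is the between-permutation spread $T\,\mathrm{Var}_{\sigma_s}(\tilde{P}_d)$, consistent with the law of total variance.

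Finally, for asymptotic normality I would apply the Lindeberg--Feller CLT to the triangular array of centered, uniformly bounded terms $X_s-p_s$ (with $|X_s-p_s|\le 1$): since the total variance grows linearly in $T$ whenever $\bar{P}_d(\omega_m)\in(0,1)$, the Lindeberg condition holds automatically for bounded summands, giving $([\bar{\mathbf{a}}]_i-\mu_{a1})/\sigma_{a1}\Rightarrow N(0,1)$. I expect the main obstacle to be the careful bookkeeping of the two layers of randomness: one must either argue conditionally on the permutation draws (Lindeberg for heterogeneous independent Bernoullis, then let the empirical mean and variance concentrate) or unconditionally (i.i.d.\ Bernoulli$(\bar{P}_d)$), and reconcile the exact mean with the \emph{inequality} on the variance through the Jensen step above. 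A secondary point worth flagging is that when several sinusoids collide into bin $p$ in some segments (cf.\ the remark preceding Claim \ref{cl:aiH0}), the per-segment success rate can only exceed $\tilde{P}_d(\omega_m,\sigma_s)$, which is consistent with reading the stated variance as an upper bound under the idealized collision-free model used to derive \eqref{eq:tpd}.
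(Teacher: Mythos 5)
Your proposal is correct and follows essentially the same route as the paper: the paper also treats $[\bar{\mathbf{a}}]_i$ as a sum of $T$ independent, heterogeneous Bernoulli variables with per-segment rates $\tilde{P}_d(\omega_m,\sigma_s)$, verifies a Lyapunov-type condition (equivalent to your Lindeberg step for bounded summands), and obtains the variance bound from the identity $\sum_s p_s(1-p_s)=T\bar{P}_d(1-\bar{P}_d)-\sum_s(p_s-\bar{P}_d)^2$, which is the quantitative form of your Jensen argument. Your only refinement is the explicit distinction between the empirical mean of the $p_s$ and the ensemble average $\bar{P}_d$ (reconciled by the law of large numbers), a point the paper glosses over.
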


\begin{proof}
Since $0<\tilde{P}_d(\omega_m, \sigma_s)(1- \tilde{P}_d(\omega_m, \sigma_s))<1$, for $\delta >0$, the Lyapunov Condition\cite{ash2000probability} holds, i.e., 
\begin{equation}
\begin{split}
&\lim_{T \to \infty} \frac{1}{\sigma_{a1}(\omega_m)^{2+\delta}} \sum_{s \in [T]} \mathbb{E}\{|[\mathbf{c}_{\sigma_s}]_j(\omega_m)-\tilde{P}_d(\omega_m, \sigma_s)|^{2+\delta}\} \\
&\le \lim_{T \to \infty} \frac{1}{\sigma_{a1}(\omega_m)^{2+\delta}} \sum_{s \in [T]} \mathbb{E}\{|[\mathbf{c}_{\sigma_s}]_j(\omega_m)-\tilde{P}_d(\omega_m, \sigma_s)|^{2}\} \\
&= \lim_{T \to \infty}  \frac{1}{\sigma_{a1}(\omega_m)^{\delta}} = 0.
\end{split}
\end{equation}

Therefore, $[\bar{\mathbf{a}}]_i$ conforms to the Normal distribution as indicated in (\ref{eq:aiH1}).  It also holds that 
\begin{equation} \label{eq:var_a1}
\begin{split}
\sigma^2_{a1}(\omega_m) &= \sum_{s \in [T]} \tilde{P}_d(\omega_m, \sigma_s)(1- \tilde{P}_d(\omega_m, \sigma_s)) \\
&= T\bar{P}_d(\omega_m)(1-\bar{P}_d(\omega_m)) \\ 
&-  \sum_{s \in [T]} (\tilde{P}_d(\omega_m, \sigma_s)-\bar{P}_d(\omega_m))^2 ,
\end{split}
\end{equation}
from which we get that $\sigma^2_{a1}(\omega_m) \le T\bar{P}_d(\omega_m)(1-\bar{P}_d(\omega_m))$, with the equality holding when $\tilde{P}_d(\omega_m, \sigma_s) = \bar{P}_d(\omega_m)$.

\end{proof}

The distribution of $[\bar{\mathbf{a}}]_i$ under $\overline{\mathcal{H}}0$ is more complicated, and we have following claim.

\begin{claim} \label{cl:aiH0}
Under $\overline{\mathcal{H}}0$, and as $T \to \infty$, 
\begin{equation} \label{eq:aiH0}
[\bar{\mathbf{a}}]_i \sim N(\mu_{a0}(\omega_m), \sigma^2_{a0}(\omega_m)),
\end{equation}
where 
\begin{equation}
\begin{split}f
\mu_{a0}(\omega_m) &= F \eta_p \bar{P}_d(\omega_m)+(T-F)\bar{P}_{fa}, \\
\sigma^2_{a0} (\omega_m) &\le F\eta_p\bar{P}_d(\omega_m)(1-\eta_p\bar{P}_d(\omega_m))\\
&+(T-F)\bar{P}_{fa}(1-\bar{P}_{fa}),
\end{split}
\end{equation}
and $F = \frac{TK\eta_m}{B}$, where $\eta_m$ is the 6.0-dB bandwidth of the pre-permutation window $\mathbf{w}$. $\eta_p \in [1, \frac{1}{\bar{P_d}(\omega_m)}]$ is a calibration factor of the probability of detection for the other $K-1$ co-existing sinusoids.
\end{claim}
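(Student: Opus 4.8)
The plan is to recycle the Lyapunov central limit theorem argument of Claim~\ref{cl:aiH1}, applied this time to a \emph{mixture} of Bernoulli trials having two kinds of success probabilities. Under $\overline{\mathcal{H}}0$ the index $i$ carries no true frequency, yet the sum in (\ref{eq:a_i}) still collects, for each segment $s$, the first-stage bit $[\mathbf{c}_{\sigma_s}]_j$ of the bin $j=\mathcal{M}(i,\sigma_s)$ from which $i$ is reverse-mapped (uniqueness of this $j$ follows from Properties~\ref{prop:mapping_reversbility} and~\ref{prop:mapping_exculusiveness}). First I would split the $T$ segments according to whether this bin happens to host one of the $K$ significant frequencies under the random permutation $\sigma_s$. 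In a \emph{colliding} segment $[\mathbf{c}_{\sigma_s}]_j$ is a Bernoulli draw with the detection probability $\tilde{P}_d(\cdot,\sigma_s)$ of whichever significant sinusoid lands in that bin (each having SNR at least $SNR_{min}$); in a \emph{non-colliding} segment it is a Bernoulli draw with the false-alarm probability $\tilde{P}_{fa}(\sigma_s)$ of (\ref{eq:tpd}).

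Next I would count the colliding segments. For fixed $i$ and uniformly random invertible $\sigma_s$, the dilation $[\sigma_s(i-\lfloor\omega_k/\Delta\omega_N\rfloor)]_N$ is asymptotically uniform over $[N]$, so by (\ref{eq:lemma13}) the event $\mathcal{M}(i,\sigma_s)=p(\omega_k,\sigma_s)$ amounts to this dilation landing in a single length-$L$ block, an event of probability $\approx L/N=1/B$. Accounting for the $\eta_m$-bin width of each $6$-dB main lobe and invoking the resolvability assumption to keep the $K$ lobes disjoint, the per-segment collision probability is $K\eta_m/B$, whence $F=TK\eta_m/B$ colliding segments in expectation. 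The excess of the colliding sinusoids' detection rates over $\bar{P}_d(\omega_m)$ is absorbed into the calibration factor $\eta_p$, defined so that the average colliding success probability equals $\eta_p\bar{P}_d(\omega_m)$; the constraint $\eta_p\bar{P}_d(\omega_m)\le1$ forces $\eta_p\le1/\bar{P}_d(\omega_m)$, matching the stated range.

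With the split in hand, $[\bar{\mathbf{a}}]_i$ is a sum of $T$ independent, non-identically distributed Bernoulli variables, so the Lyapunov condition is verified verbatim as in Claim~\ref{cl:aiH1} and the normal limit follows as $T\to\infty$. The mean is immediate from linearity, $\mu_{a0}(\omega_m)=F\eta_p\bar{P}_d(\omega_m)+(T-F)\bar{P}_{fa}$. For the variance I would apply the identity $\sum_s p_s(1-p_s)=n\bar{p}(1-\bar{p})-\sum_s(p_s-\bar{p})^2$ used for (\ref{eq:var_a1}) to each group separately, with $\bar{p}=\eta_p\bar{P}_d(\omega_m)$ over the $F$ colliding segments and $\bar{p}=\bar{P}_{fa}$ over the $T-F$ remaining ones; discarding the two nonnegative correction sums gives the upper bound on $\sigma^2_{a0}(\omega_m)$ exactly as claimed.

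The step I expect to be the main obstacle is the collision count. Strictly, the number of colliding segments is itself random --- it is the count of permutations sending $i$ onto an occupied bin --- so the clean partition into $F$ and $T-F$ deterministic groups is really the statement that the colliding fraction concentrates at $K\eta_m/B$ by the law of large numbers as $T\to\infty$. Moreover the uniformity of $[\sigma_s(\cdot)]_N$, the $\eta_m$-bin lobe width, and the disjointness of the $K$ lobes all hold only approximately, owing to the floor operations in $\mathcal{M}$ and $p(\cdot)$ and to lobes straddling block boundaries. Making the $\eta_m$ and $\eta_p$ bookkeeping precise, and confirming that these approximations are negligible in the asymptotic regime in which the claim is stated, is the delicate part; the central limit theorem invocation and the variance bound are then routine.
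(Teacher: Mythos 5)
Your proposal follows essentially the same route as the paper's proof: partition the $T$ segments into those whose mapped bin $j=\mathcal{M}(i,\sigma_s)$ hosts a significant frequency (Bernoulli with rate calibrated by $\eta_p$) and those that do not (Bernoulli with rate $\tilde{P}_{fa}(\sigma_s)$), take the expected count of colliding segments to be $F=TK\eta_m/B$, and invoke the Lyapunov CLT with the same variance-bound identity as in Claim~\ref{cl:aiH1}. Your closing remarks on the random nature of the collision count and the approximations hidden in $\eta_m$, $\eta_p$, and the uniformity of the dilation are in fact more candid than the paper, which simply asserts the deterministic split ``on the average'' and concludes.
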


\begin{proof}
Under $\overline{\mathcal{H}}0$, each term in (\ref{eq:a_i}) may be distributed differently. To illustrate this, we consider a location $i\in[N]$ in the frequency domain of the input signal, which does not contain a significant frequency, as shown in Fig. \ref{fig:Map_RevMap}.  Let $j=\mathcal{M}(i, \sigma_s)$  be the mapping. There would be two cases for $j$: 1) $j$ does not contain a significant frequency; or 2) $j$ contains at least one significant frequency, with its SNR at least $SNR_{min}$. In the former case, ${[\mathbf{c}_{\sigma_s}]_j} \sim \mathrm{Bernoulli}\left(\tilde{P}_{fa}(\sigma_s)\right)$, i.e., ${[\mathbf{c}_{\sigma_s}]_j}$ is under $\mathcal{H}0$. For the latter case, ${[\mathbf{c}_{\sigma_s}]_j} \sim \mathrm{Bernoulli}\left(\tilde{P}_{d}(\omega_m,  \sigma_s)\right)$,  i.e., ${[\mathbf{c}_{\sigma_s}]_j}$ is under $\mathcal{H}1$. Due to the permutation being uniformly random, on the average, the number of $[\mathbf{c}]_j$ under $\mathcal{H}1$ is $F = \frac{TK\eta_m}{B}$, and the number of $[\mathbf{c}]_j$ under $\mathcal{H}0$ is $T-F$. The parameter $\eta_m$ reflects the fact that sparsity is affected by the pre-permutation windowing. Since we assume that $\mathbf{v}(\omega_m)$ has the minimum SNR, i.e., $SNR_{min}$, other sinusoids with higher SNR will have larger $\bar{P}_d$. Hence we multiply $\bar{P}_d(\omega_m)$ with $\eta_p$ to calibrate the successful rate of $[\mathbf{c}_{\sigma_s}]_j$ under $\mathcal{H}1$. If all the sinusoids's SNR were equal to $SNR_{min}$, then $\eta_p = 1$; on the other hand, if the co-existing sinusoids' SNR were sufficient high so that their $\bar{P}_d$ approaches to $1$, then $\eta_p = \frac{1}{\bar{P_d}(\omega_m)}$. Finally, the results follows immediately by applying Lyapunov CLT. 
\end{proof}

\begin{remark} \label{rm:1}
From Claim \ref{cl:aiH1} and \ref{cl:aiH0}, we notice that for the second stage detection, the LLRT is obtained based on two Normal distributions. The test statistic under $\overline{\mathcal{H}}1$ is ``stable'', for it only depends on  $\bar{P}_d(\omega_m)$. However, under $\overline{\mathcal{H}}0$, the distribution depends on the number of co-existing sinusoids, as well as on each sinusoid's SNR. The larger $K$ and higher SNR will ``push'' the distribution under $\overline{\mathcal{H}}0$ closer to the distribution under $\overline{\mathcal{H}}1$, hence degrades the detection performance.  In order to compensate for this, a larger $SNR_{min}$ is required.
\end{remark}

A natural extension of Remark \ref{rm:1} is Remark \ref{re:fail}, which gives the condition under which the RSFT will reach its limit.
\begin{remark} \label{re:fail}
Assuming that $P_d \ge P_{fa}$, the RSFT will fail if $K\eta_m \ge B$ no matter how large the $SNR_{min}$ is. 
\end{remark}

\begin{proof}
Assuming $\eta_p =1$ and substituting $K\eta_m = B$ into $F$ yields $F=T$, which means that the distributions under both hypothes are the same, hence the two hypothesis cannot be discriminated. If $\eta_p > 1$, the assumption of $P_d \ge P_{fa}$ will be violated as $K\eta_m$ approaching $B$.
\end{proof}

Based on the above discussion, the optimal threshold design in Problem \ref{prob:1} can be solved by the following optimization problem, i.e.,
\begin{equation} \label{eq:opt}
\begin{split}
&Minimize_{\{\mu, \bar{P}_{fa}, \bar{P}_{d}\}} \quad SNR_{min}  \\
&Subject\; to \\ 
&\quad \quad \bar{P}_d(\omega_m) = \bar{P}_{fa}^{ \bar{\beta} \over  \bar{\alpha}(p, \omega_m) SNR_{min} +  \bar{\beta}}\\
&\quad \quad P_{fa} = \int_{\mu}^\infty g_{a_{0}} (u) du\\
&\quad \quad P_{d} = \int_{\mu}^\infty g_{a_{1}} (u) du\\
&\quad \quad 0 \le \bar{P}_{fa} \le 1,\;  0 \le \bar{P}_{d} \le 1 \\
&\quad \quad \mu \in [T] , 
\end{split}
\end{equation}
where $g_{a_{0}} (u), g_{a_{1}} (u)$ are the asymptotic  PDF\footnote{We take the upper bounds of the variances in both distributions. It is shown in Section \ref{sec:var_bounds} that the actual variances is close to their upper bounds.} of $[\bar{\mathbf{a}}]_i$ (which corresponds the weakest sinusoid) under $\overline{\mathcal{H}}0$ and $\overline{\mathcal{H}}1$, respectively. Since both of them are Normal distributions, with fixed threshold, i.e., $\mu$, we can solve for $\bar{P}_d(\omega_m), \bar{P}_{fa}$, and then compute the $SNR_{min}$. By enumerating $\mu \in [T]$, the minimum worst case SNR, i.e., $SNR_{min}^\ast$ can be found, and the corresponding $\mu^\ast$ is the optimal threshold for the second stage of detection. The optimal threshold for the first stage of detection, i.e., $\gamma^\ast$, can thus be calculated via (\ref{eq:bpd}).

\begin{remark} \label{rm:bounds}
In Claim \ref{cl:aiH0}, we set a parameter $\eta_p$ to calibrate the distribution of $[\bar{\mathbf{a}}]_i$ under $\overline{\mathcal{H}}0$. By setting $\eta_p$ as $1$ or $\frac{1}{\bar{P_d}(\omega_m)}$, we can get respectively the lower and upper bound of $SNR_{min}^\ast$ for the variation of SNR of other co-existing sinusoids. If $K$ is the maximum budget of signal sparsity,  the optimal thresholds found by solving (\ref{eq:opt}) provides the optimal thresholds for the worst case. If the actual signal sparsity were less than $K$, $P_{fa}$ would be lower than the expected value, while $P_d$ would be unchanged according to Remark \ref{rm:1}.
\end{remark}

By averaging over the permutation, asymptotically, $SNR_{min}^\ast$ does not depend on the permutation. However, it still depends on $\omega_m$, and we have the following claim to manifest their relationship.
\begin{claim} \label{cl:freqDep}
The dependence of $SNR_{min}^\ast$ on $\omega_m$ is due to the off-grid loss\cite{harris1978use} from off-grid frequencies. $SNR_{min}^\ast$ attains its minimum when $\omega_m$ is on-grid, i.e. $\omega_m = k \Delta \omega_N, k\in[N]$. When $\omega_m$ is in the middle of two grid points, i.e., $\omega_m = (k+\frac{1}{2}) \Delta \omega_N$, $SNR_{min}^\ast$ attains its maximum. 
\end{claim}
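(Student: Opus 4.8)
The plan is to show that the entire dependence of $SNR_{min}^\ast$ on $\omega_m$ is funneled through the single scalar $\bar{\alpha}(p,\omega_m)$, and then to identify $\bar\alpha$ with the peak main-lobe gain of the windowing--DFT chain, whose behaviour as a function of the off-grid offset is precisely the off-grid (scalloping) loss tabulated in \cite{harris1978use}. First I would solve the first constraint of (\ref{eq:opt}) for $SNR_{min}$: taking logarithms and writing $\rho \triangleq \log\bar{P}_d / \log\bar{P}_{fa}$ yields
\[
SNR_{min} = \frac{\bar\beta}{\bar\alpha(p,\omega_m)}\cdot\frac{1-\rho}{\rho}.
\]
The crucial observation is that, for a fixed second-stage threshold $\mu$, the two integral constraints $P_{fa}=\int_\mu^\infty g_{a_0}$ and $P_d=\int_\mu^\infty g_{a_1}$ fix the first-stage operating point $(\bar{P}_d,\bar{P}_{fa})$ — and hence $\rho$ — entirely in terms of $\mu$ and the $\omega_m$-independent quantities $T,K,B,\eta_m,\eta_p,P_d,P_{fa}$; indeed, by Claims \ref{cl:aiH1} and \ref{cl:aiH0} the frequency $\omega_m$ enters $g_{a_0},g_{a_1}$ only through $\bar P_d$ and $\bar P_{fa}$ themselves, not directly, and $\bar\beta$ in (\ref{eq:alpha_beta}) does not involve $\omega$ at all. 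Thus for each $\mu$ the right-hand side factors as $h(\mu)/\bar\alpha(p,\omega_m)$ with $h(\mu)$ independent of $\omega_m$, and since $\bar\alpha$ does not depend on $\mu$, minimizing over $\mu\in[T]$ gives
\[
SNR_{min}^\ast(\omega_m) = \frac{H}{\bar\alpha(p,\omega_m)}, \qquad H \triangleq \min_{\mu\in[T]} h(\mu) > 0,
\]
where positivity follows from the feasible regime $0<\bar P_{fa}<\bar P_d<1$, i.e. $\rho\in(0,1)$. Consequently $SNR_{min}^\ast$ is a strictly decreasing function of $\bar\alpha(p,\omega_m)$, so it is minimized exactly where $\bar\alpha$ is maximized and maximized where $\bar\alpha$ is minimized.

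It then remains to characterize $\bar\alpha(p,\omega_m) = \mathbb{E}\{|\mathbf{u}_p^H\mathbf{V}_{\sigma_s}\mathbf{v}(\omega_m)|^2\}$, which by construction is the expected power deposited by the fully processed sinusoid into its own peak bin $p$ of (\ref{eq:lemma13}) — that is, the squared coherent gain of the combined pre-permutation window, flat window, aliasing and $B$-point DFT. I would treat the two extreme cases directly. When $\omega_m = k\Delta\omega_N$ is on-grid, the floor in (\ref{eq:lemma13}) is exact and, for \emph{every} realization of $\sigma_s$, Property \ref{prop:permutation} maps the on-grid component to another exact grid point; the peak bin therefore sits at the very top of the main lobe and captures the full coherent gain, so $\alpha(p,\sigma_s,\omega_m)$ — and hence its average $\bar\alpha$ — attains its largest possible value, and by the monotone relation above $SNR_{min}^\ast$ is minimized. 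When $\omega_m = (k+\tfrac12)\Delta\omega_N$ the component lies midway between two grid points and suffers the maximal off-grid (scalloping) loss of the effective window \cite{harris1978use}, so the peak bin captures the least energy, $\bar\alpha$ is minimized, and $SNR_{min}^\ast$ is maximized. This yields exactly the assertion of the claim.

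The main obstacle I anticipate is this last step: making the ``on-grid is maximal, half-bin is minimal'' statement about $\bar\alpha$ rigorous in the presence of the random dilation $\sigma_s$. The on-grid case is clean because Property \ref{prop:permutation} keeps the component on-grid for every $\sigma_s$, so the coherent gain is realized uniformly and the expectation is unambiguously maximal. For off-grid $\omega_m$, however, the modular dilation does not preserve a pure sinusoid, so the effective fractional offset, and thus the per-realization scalloping loss, varies with $\sigma_s$; one must argue that the worst-case averaged loss is still attained at the half-bin offset. I would handle this by bounding $\alpha(p,\sigma_s,\omega_m)$ above by the on-grid coherent gain for all $\sigma_s$ (which secures the on-grid maximizer), and by appealing to the symmetry of the half-bin offset together with the monotone decrease of the window main-lobe response away from its peak, as quantified in \cite{harris1978use}, to identify the half-bin point as the minimizer of $\bar\alpha$.
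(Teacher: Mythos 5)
Your proposal is correct and rests on the same underlying mechanism as the paper's proof --- the scalloping (off-grid) loss of the windowed DFT peak --- but it is organized differently and is more rigorous in one respect. The paper argues directly at the signal level: the symmetric pre-permutation window makes the peak spectral amplitude maximal on-grid and minimal at the half-bin offset, the permutation merely permutes the DFT coefficients (up to unit-modulus phases) and so preserves that peak magnitude, and the flat-window plus aliasing act as an ideal frequency-domain downsampler that passes the peak through; it then concludes, without further justification, that the required $SNR_{min}^\ast$ moves oppositely to this amplitude. Your first step supplies exactly the missing link: solving the first constraint of (\ref{eq:opt}) to get $SNR_{min}=\frac{\bar\beta}{\bar\alpha(p,\omega_m)}\cdot\frac{1-\rho}{\rho}$ and observing, via Claims \ref{cl:aiH1} and \ref{cl:aiH0}, that for fixed $\mu$ the pair $(\bar P_d,\bar P_{fa})$ --- hence $\rho$ --- is determined by $\omega_m$-independent quantities, so that $SNR_{min}^\ast = H/\bar\alpha(p,\omega_m)$ is strictly decreasing in $\bar\alpha$. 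This factorization is a genuine improvement over the paper's ``as a result'' step. Conversely, the obstacle you anticipate in the second half largely dissolves once you use the paper's observation: the permutation acts \emph{after} the pre-permutation windowing, and by Property \ref{prop:permutation} it only reindexes the $N$ DFT coefficients of the windowed sinusoid and rotates their phases, so the peak coefficient magnitude $|\hat{W}(\delta)|$ (with $\delta$ the fractional offset of $\omega_m$) is preserved for \emph{every} realization of $\sigma_s$; the fractional offset does not get re-randomized by the dilation, and the $\sigma_s$-dependence of $\alpha$ enters only through which other (side-lobe) coefficients alias into bin $p$ of (\ref{eq:lemma13}), which both you and the paper neglect as being far below the noise floor. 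With that observation your upper bound for the on-grid case and the half-bin minimization of $\bar\alpha$ follow from the monotone main-lobe shape exactly as you describe, so the two arguments meet in the same place.
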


\begin{proof}
Assume $\mathbf{r} = \mathbf{v}(\omega_m)$. Since the pre-permutation window $\mathbf{w}$ is symmetric, if we applied DFT to the pre-permuted data, the amplitude of the spectrum would attain its maximum and minimum respectively when $\omega_m$ is on-grid or in the middle between two grid points. The subsequent permutation operation would not change the amplitude of the spectrum. Also, since the flat-window is used, the downsampling in the frequency domain, which is a result of aliasing, will not affect the amplitude either. The on-grid frequency generates highest amplitude, while the frequency in the middle of between grid points has the lowest amplitude. As a result, the two detection stages require the lowest SNR for on-grid frequencies, and the highest SNR for frequencies lying in the middle of between grid points. 
\end{proof}

\subsection{Tradeoff between Worst Case SNR and Complexity} \label{seg:complexity}
\subsubsection{Comparison to Bartlett Method}
We compare the complexity of the RSFT with the FFT-based Bartlett method (see Appendix \ref{app:Bartlett}) by counting the number of operations  in both algorithms as shown in Table (\ref{tb:fftComp}) and Table (\ref{tb:sftComp}). The RSFT has complexity equal to 
\begin{equation} \label{eq:RSFT_complexity}
\mathcal{O}\left(T(N+B+ B \log B+\frac{K\eta_m N}{B\eta_p})+N \right),
\end{equation}
while the Bartlett method has complexity equal to $\mathcal{O}\left(TN(1+\log N)+N \right)$.  Fig. \ref{fig:ComplexityComp} compares the RSFT's complexity to that of Bartlett's  for various $B$ and $K$. One can see that the RSFT enabled savings are remarkable when $B$ is chosen properly. Specifically, from Fig. \ref{fig:ComplexityComp}  one can see, the lowest complexity for $K$ equals to $5, 50, 100$ is achieved when $B$ equals to $32, 64, 128$, respectively. Note that the core operation in RSFT is still FFT-based, but on a reduced dimension space.  By leveraging the existing high performance FFT libraries such as FFTW \cite{frigo1998fftw}, the implementation of the RSFT algorithm could be further improved.  

\begin{remark}
The complexity of RSFT is linearly depend on $N, T, K, 1/\eta_p$ and $\eta_m$, hence it is beneficial to choose a pre-permutation window with a small $\eta_m$, provided the attenuation of the side-lobes is sufficient. We can also choose the optimal $B$ from (\ref{eq:RSFT_complexity}) to minimize the computation. However, there are  two additional constrains for $B$, one is $B$ should be a power of 2, the other is $K\eta_m \ge B$, as stated in Remark \ref{re:fail}. 
\end{remark}

\begin{table}[!t]
\caption{Computational Complexity of Bartlett Method}
\label{tb:fftComp}
\centering
\begin{tabular}{|c|c|}
 \hline
 \textbf{Procedure} & \textbf{Number of Operations} \\
 \hline
   Windowing & $TN$  \\
  \hline
  FFT & $T\frac{N}{2}\log N$ \\
   \hline
  Square & $TN$  \\
  \hline
  Detection & $N$  \\
  \hline
   Complexity & $\mathcal{O}\left(TN(1+\log N)+N \right)$ \\
  \hline
\end{tabular}
\end{table}

\begin{table}[!t]
\caption{Computational Complexity of RSFT}
\label{tb:sftComp}
\centering
\begin{tabular}{|c|c|}
 \hline
 \textbf{Procedure} & \textbf{Number of Operations} \\
 \hline
  Pre-Permutation Win & $TN$  \\
  \hline
 Permutation & $TN$ \\
  \hline
  Flat-Win & $TN$ \\
    \hline
  Aliasing & $TB(N/B-1)$   \\
 \hline
  FFT & $T\frac{B}{2}\log B$  \\
   \hline
  Square & $TB$    \\
     \hline
  First-Stage-Detection & $TB$  \\
\hline
Reverse-Mapping & $\frac{TK\eta_m N}{B\eta_p}$ \\
\hline
Second-stage-Detection & $N$ \\
  \hline
     Complexity & $\mathcal{O}\left(T(N+B+ B \log B+\frac{K\eta_m N}{B\eta_p})+N \right)$ \\
  \hline
\end{tabular}
\end{table}

\begin{figure}[!t]
	\begin{center}
	\includegraphics[scale=0.44]{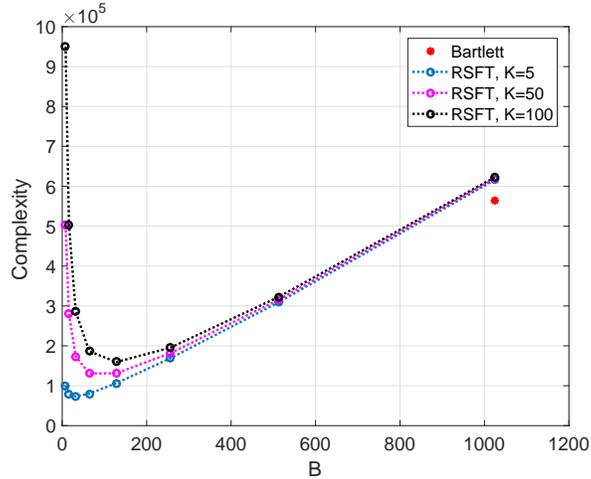} 
	\caption{\textbf{Comparison of Complexity.} $N=1024, T=50, \eta_m = 1.4,\eta_p=1, B=\{8, 16, 32, 64, 128, 256, 512, 1024\}$.}
	\label{fig:ComplexityComp}
	\end{center}
\end{figure}

\subsubsection{Worst Case SNR and Complexity Trade Off}
The reduced complexity of RSFT is achieved at a the cost of an increased $SNR_{min}$, which decreases the ability of detecting weak signals. The tradeoff between $SNR_{min}$ and complexity for various choices of parameters is shown in Fig. \ref{fig:SensCompTrad}. The performance of the Bartlett method is also shown as a reference.  From Fig. \ref{fig:SensCompTrad} we can see that  $B$ plays a central role in trading off $SNR_{min}$ and complexity.  A proper choice of $B$ can enhance the computational efficiency significantly with a reasonable increase of $SNR_{min}$. Also, since the sparsity $K$ affects both $SNR_{min}$ and complexity, a less sparse signal will worsen both. The complexity of RSFT is larger than that of the Bartlett method by setting $B=N$, due to the additional processing in the algorithm. Also, it cannot achieve the same $SNR_{min}$ as the Bartlett method does. 

\begin{figure}[!t]
	\begin{center}
	\includegraphics[scale=0.44]{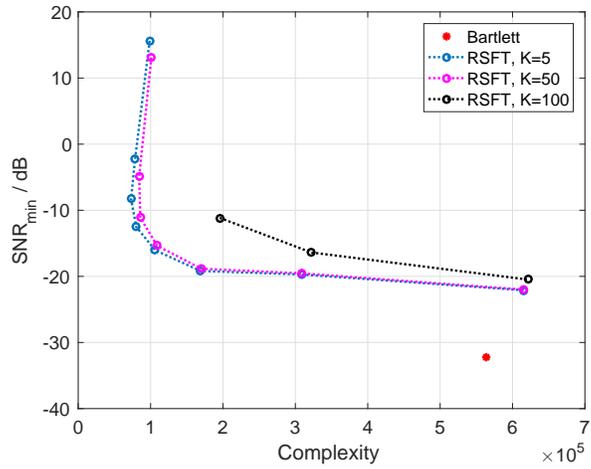} 
	\caption{\textbf{Worst Case SNR and Complexity Trade off.} $N=1024, T=50, \eta_m = 1.4, \eta_p=1, B=\{8, 16, 32, 64, 128, 256, 512, 1024\}, P_d=0.9, P_{fa} = 10^{-6}, K = \{5, 10, 100\}, \omega_m = \Delta \omega_N/2$. The red dot shows the performance of the Bartlett method, which serves as a reference.}
	\label{fig:SensCompTrad}
	\end{center}
\end{figure}

\section{Numerical Results} \label{sec:numerical}
In this section, we verify our theoretical findings via simulations. We use the following common parameters for  various settings, unless we state specifically. We take the following values: $N=1024, T=50, B=64, \eta_m=1.8, P_d =0.9, P_{fa}=10^{-6}, \omega_m = 64.5 \Delta \omega_N \approx 0.4$. We use a Dolph-Chebyshev window with $40dB$ attenuation as pre-permutation windowing. The flat-window is also based on this window, and we set its passband width as $1/B$.

\subsection{Lower and Upper Bounds of $SNR_{min}^\ast$ for Fixed Sparsity}
According to Remark \ref{rm:bounds}, we can calculate the lower bound and the upper bound of $SNR_{min}^\ast$ and their corresponding  thresholds for fixed sparsity. Fig. \ref{fig:exLowBd} and \ref{fig:exHiBd} shows the thresholding of RSFT with both bounds.  We mark the amplitude of $\omega_m$ with a magenta dot in each figure.

\begin{figure}[!t]
    \centering
    \subfloat[]{{\includegraphics[scale=0.22]{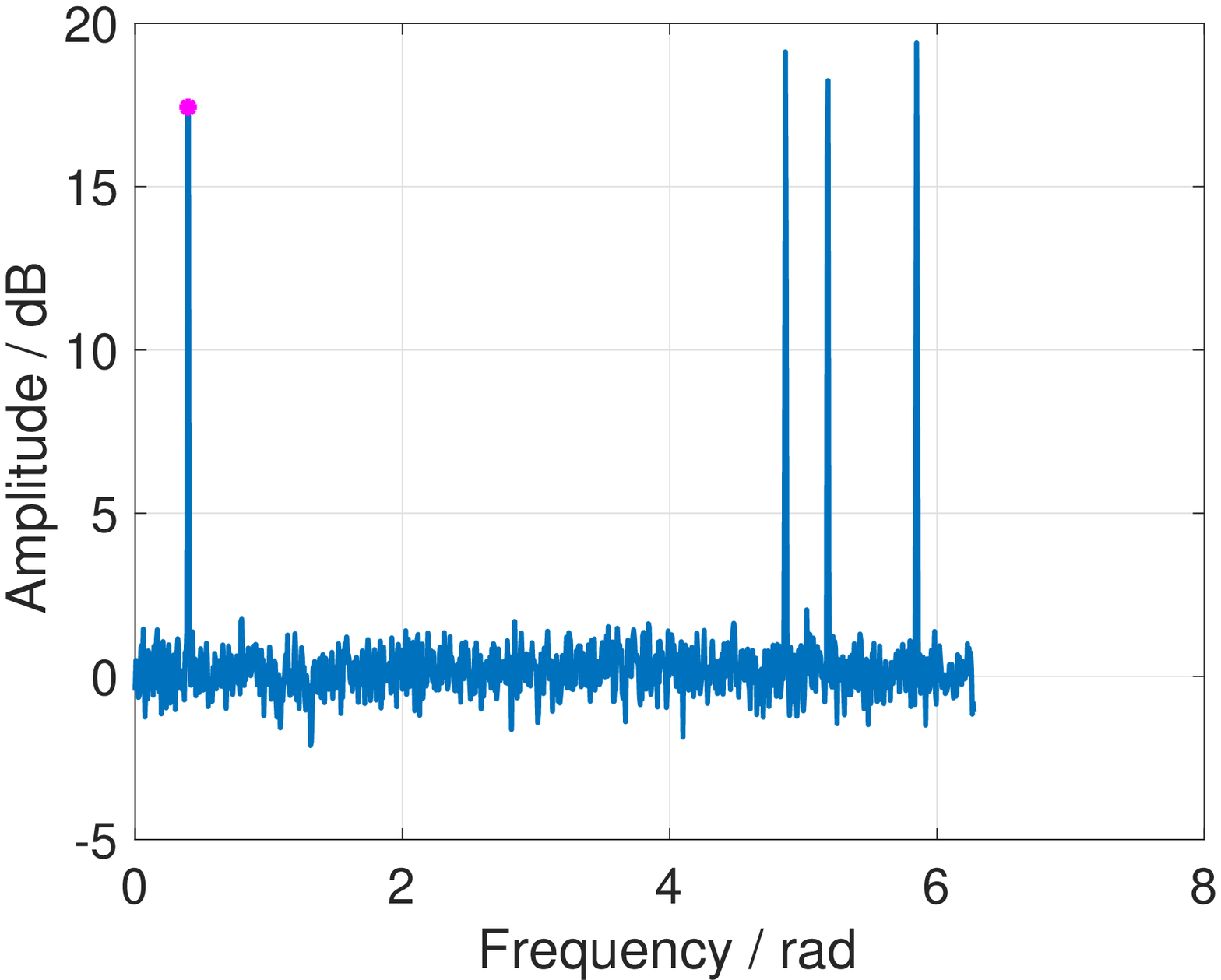} }}%
    \subfloat[]{{\includegraphics[scale=0.22]{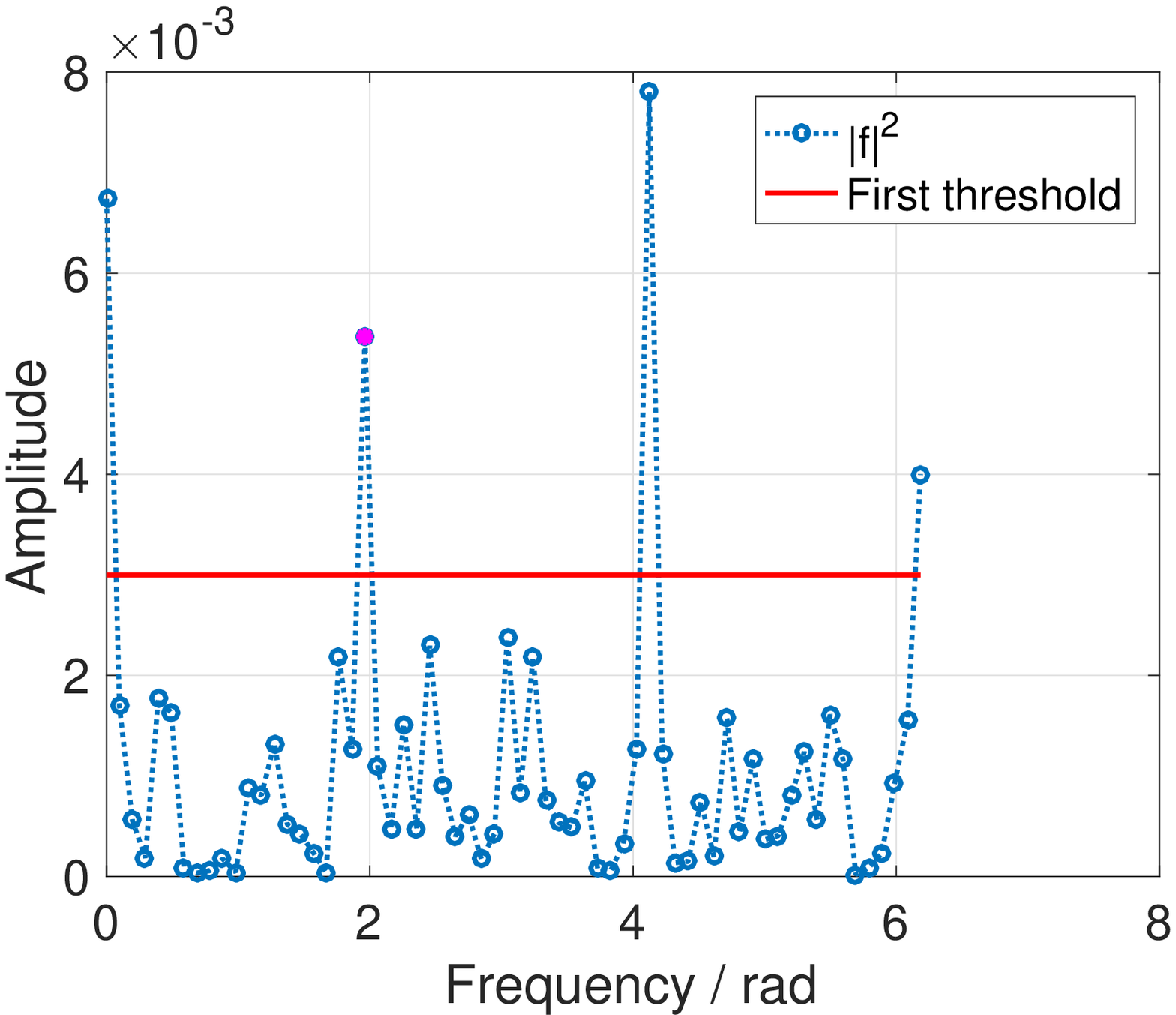} }}%
     \hfil
    \subfloat[]{{\includegraphics[scale=0.22]{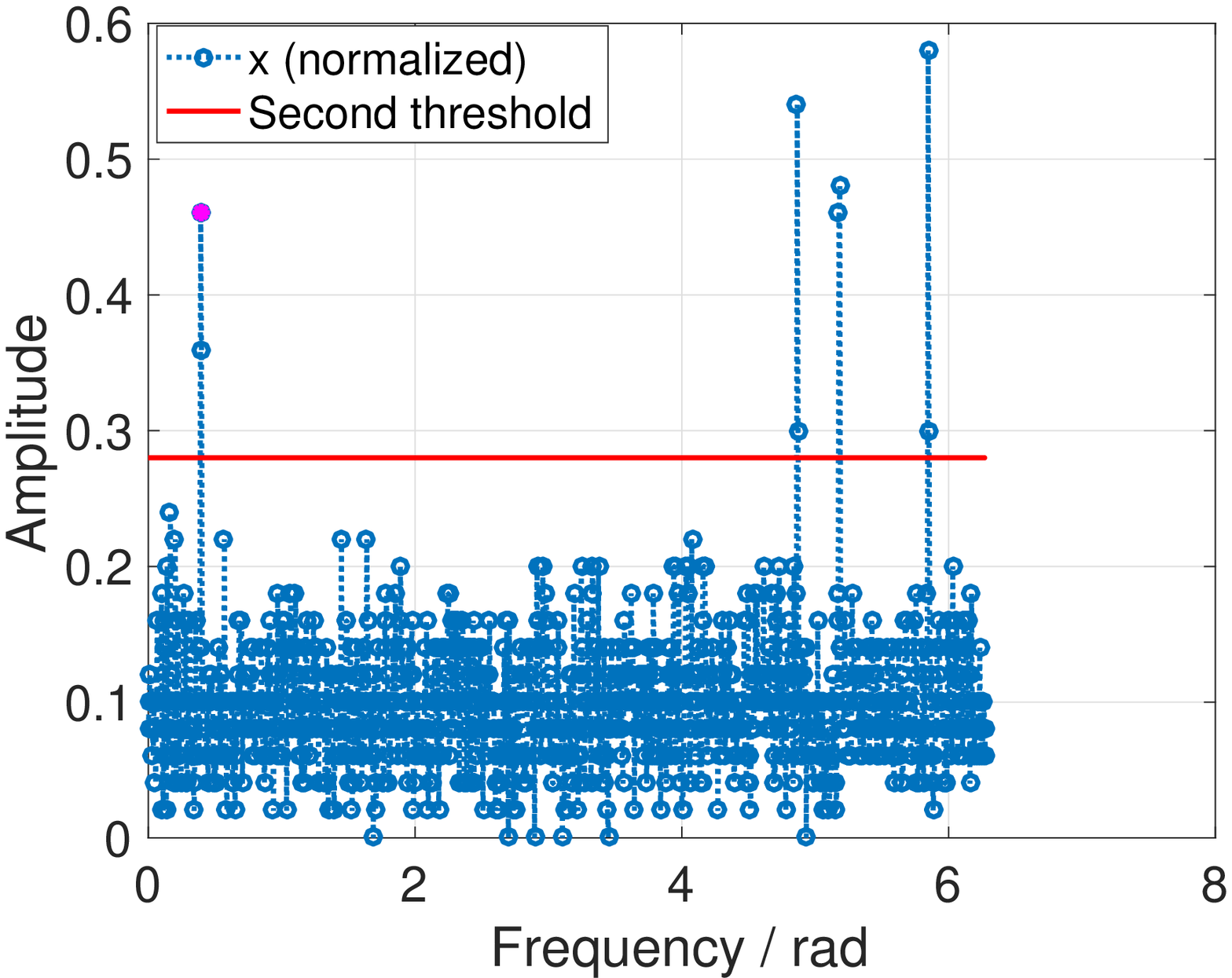} }}%
    \subfloat[]{{\includegraphics[scale=0.22]{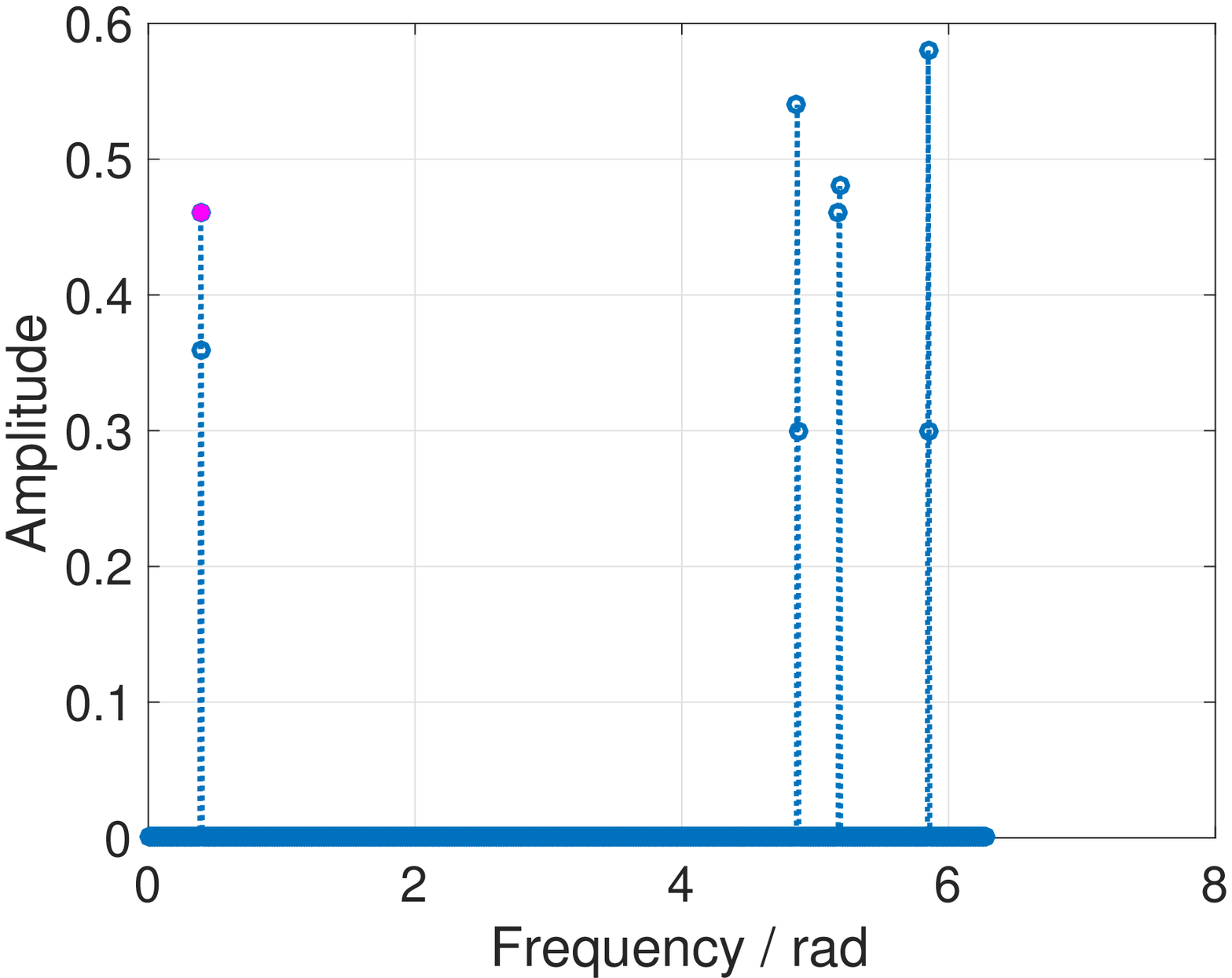} }}%
    \caption{\textbf{Lower Bound of Optimal Thresholds for Fixed Sparsity.} $K=4, SNR_{min}^\ast \approx -10.1dB, \gamma^\ast \approx 3.0\times10^{-3}, \mu^\ast/T \approx 0.3$ (a) Spectrum from Bartlett method. (b) First-stage-detection. (c) Second-stage-detection. Data and threshold is normalized by $T$. (d) Final output of RSFT.}
    \label{fig:exLowBd}
\end{figure}

\begin{figure}[!t]
    \centering
    \subfloat[]{{\includegraphics[scale=0.22]{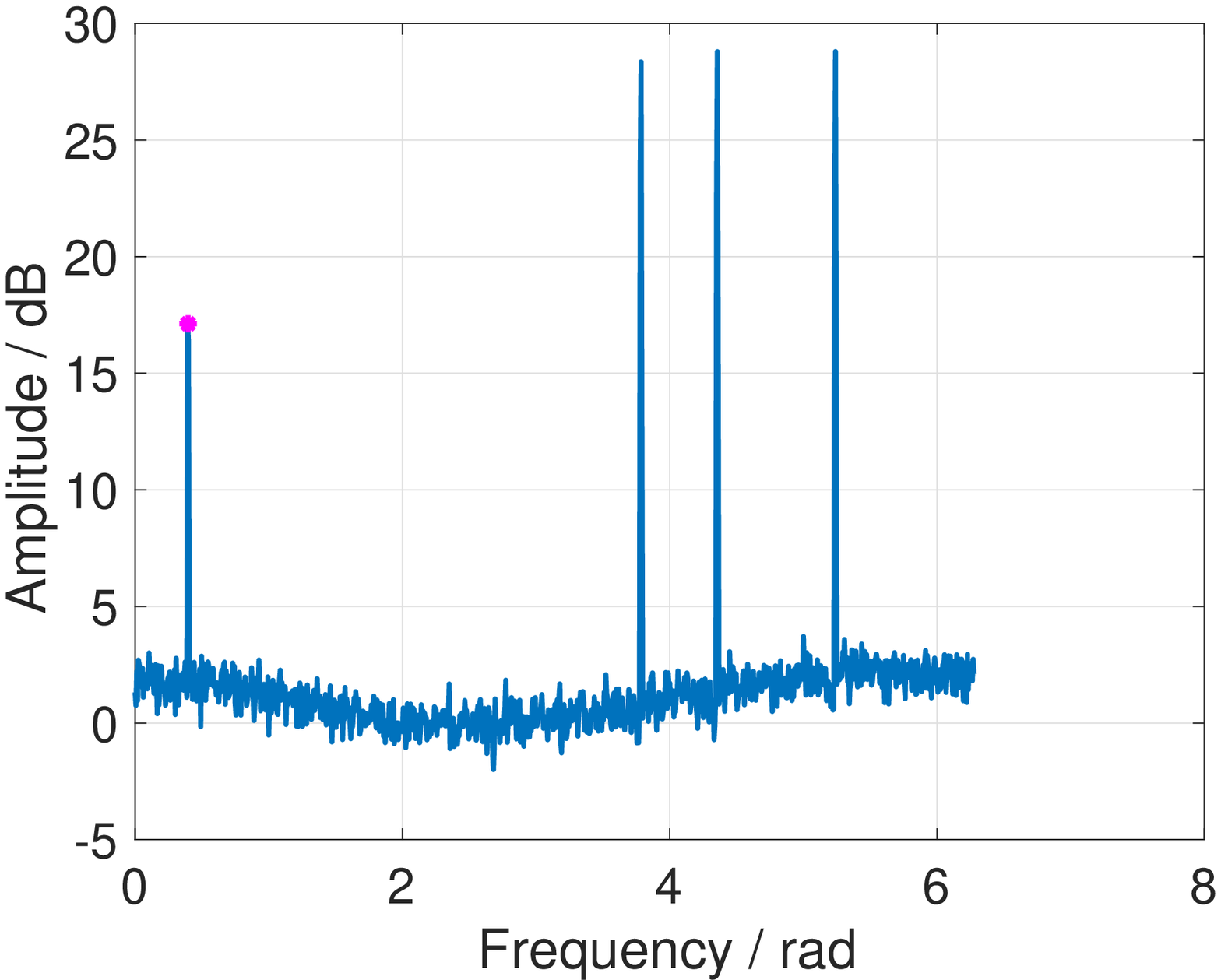} }}%
    \subfloat[]{{\includegraphics[scale=0.22]{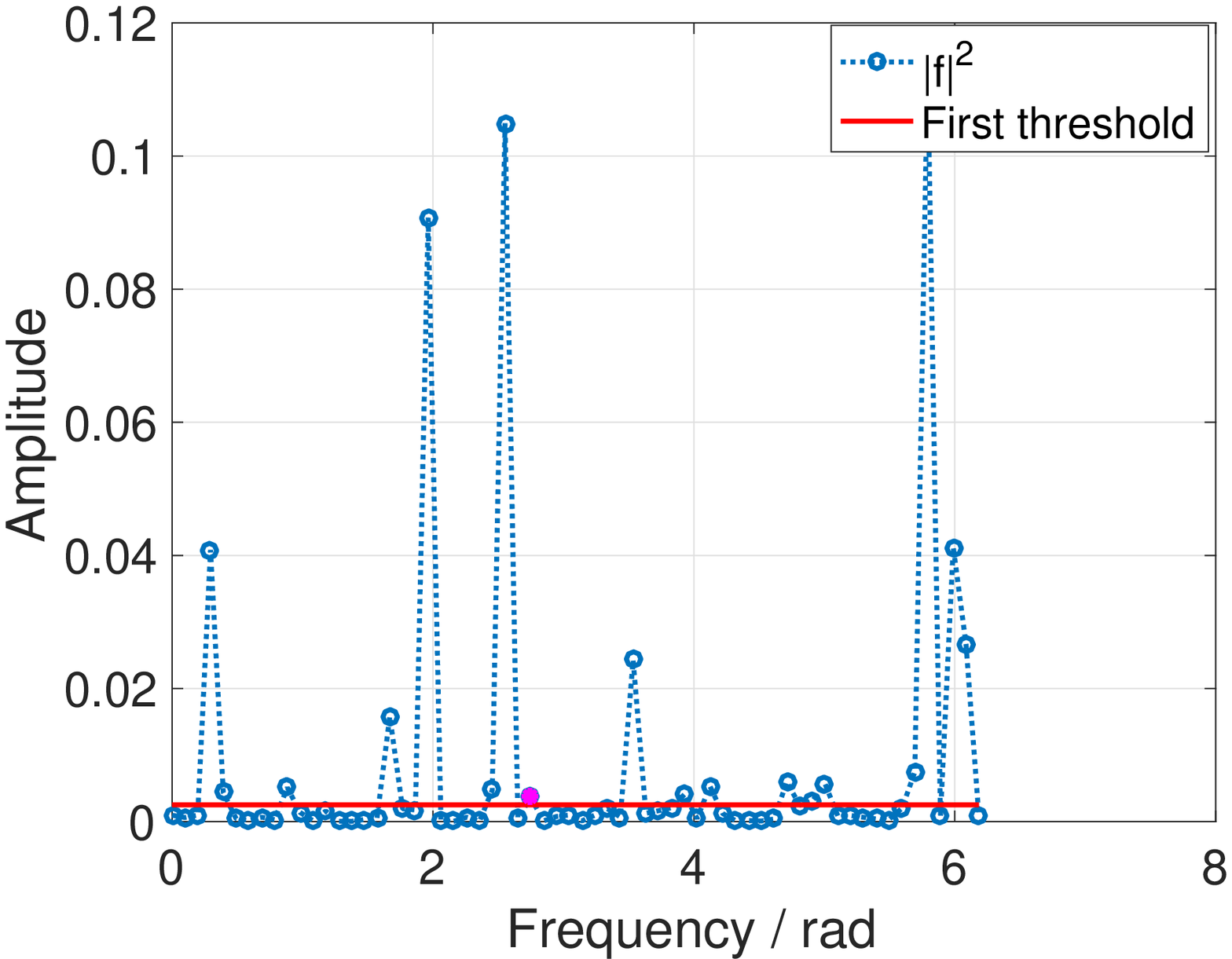} }}%
     \hfil
    \subfloat[]{{\includegraphics[scale=0.22]{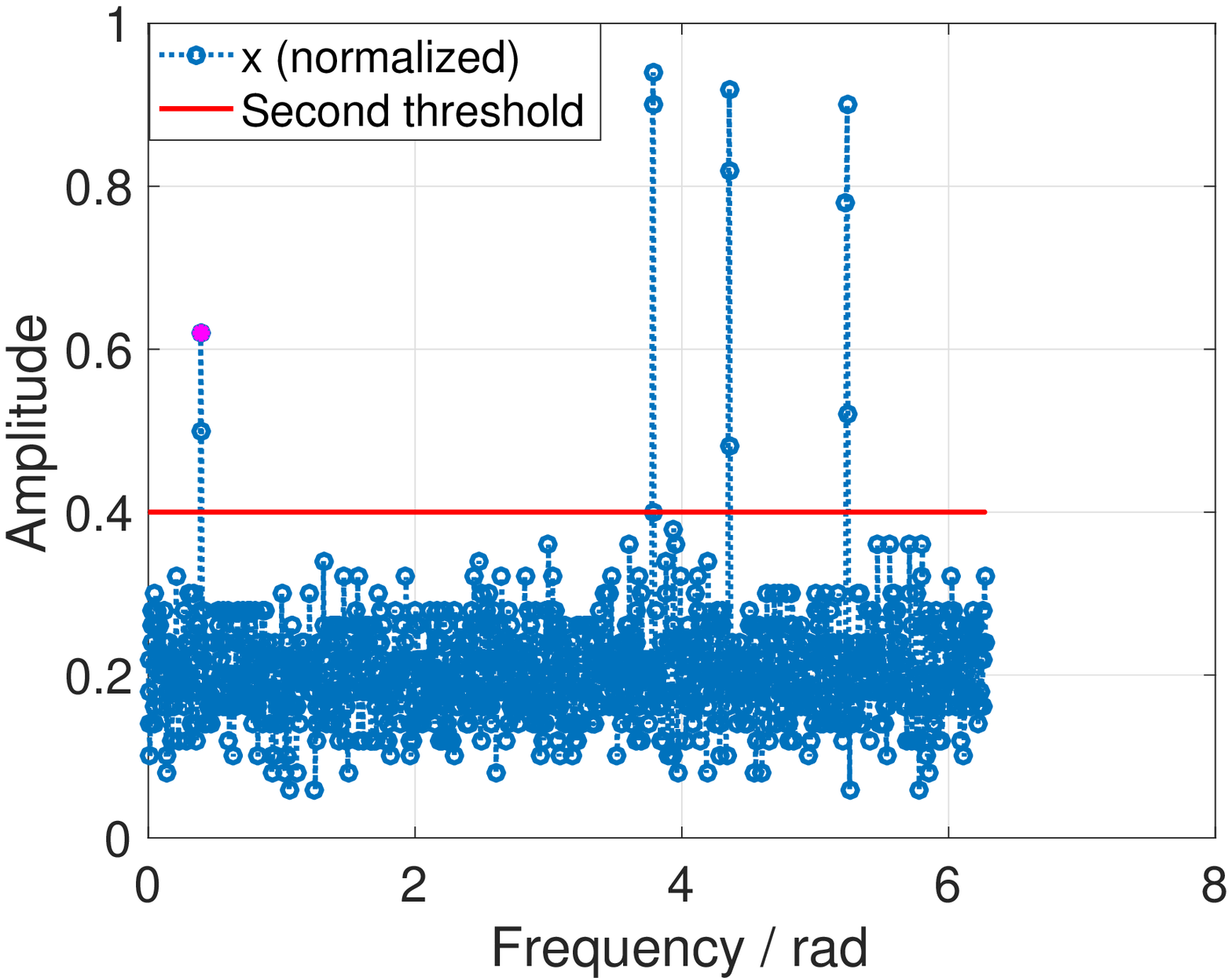} }}%
    \subfloat[]{{\includegraphics[scale=0.22]{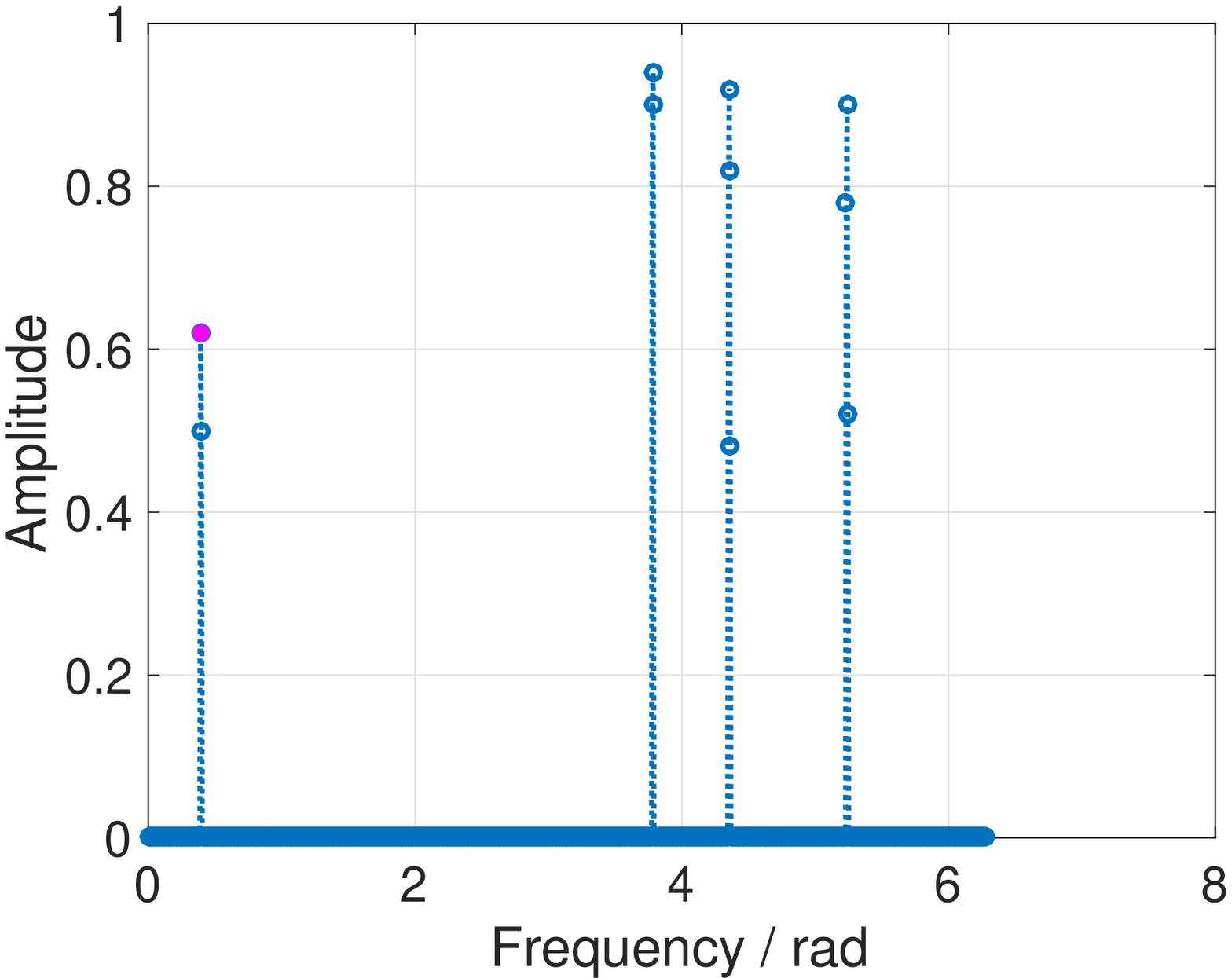} }}%
    \caption{\textbf{Upper Bound of Optimal Thresholds for Fixed Sparsity.} $K=4, SNR_{min}^\ast \approx -9.1dB,\gamma^\ast \approx 2.5\times10^{-3}, \mu^\ast/T \approx 0.4$ The co-existing sinusoids's SNR is $10dB$ higher than $SNR_{min}$ (a) Spectrum from Bartlett method. (b) First-stage-detection. (c) Second-stage-detection. Data and threshold is normalized by $T$. (d) Final output of RSFT.}
    \label{fig:exHiBd}
\end{figure}

\subsection{Unknown Signal Sparsity}
Since we do not assume that we know the exact sparsity of the signal, we will use a guess for $K$. Fig. \ref{fig:moreSparse} shows the optimal design was toward $K=10$, however, when the true sparsity is $K=3$, the system yields the same $P_d$ but better $P_{fa}$, since the noise level is much lower than expected.  

\begin{figure}[!t]
    \centering
    \subfloat[]{{\includegraphics[scale=0.22]{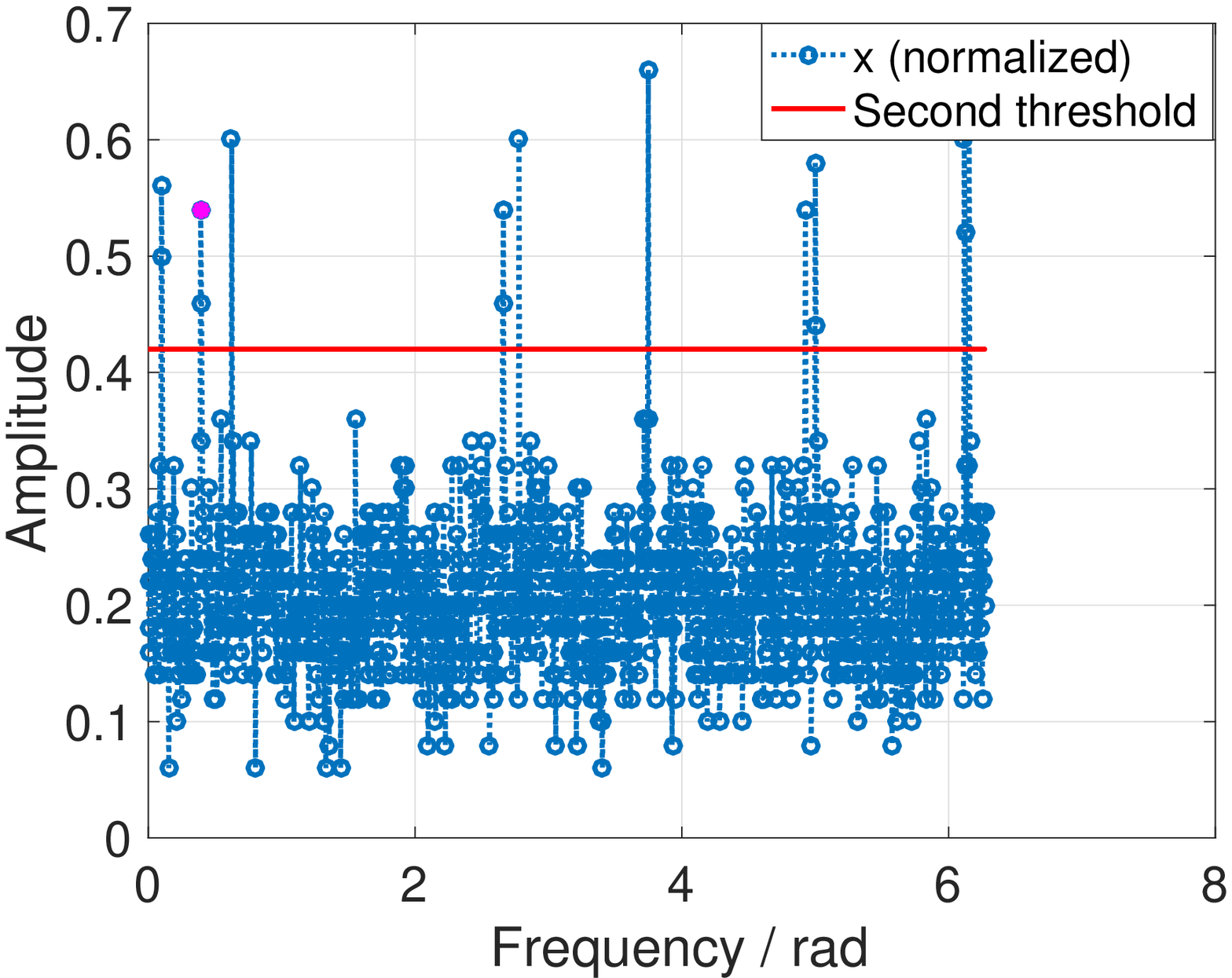} }}%
    \subfloat[]{{\includegraphics[scale=0.22]{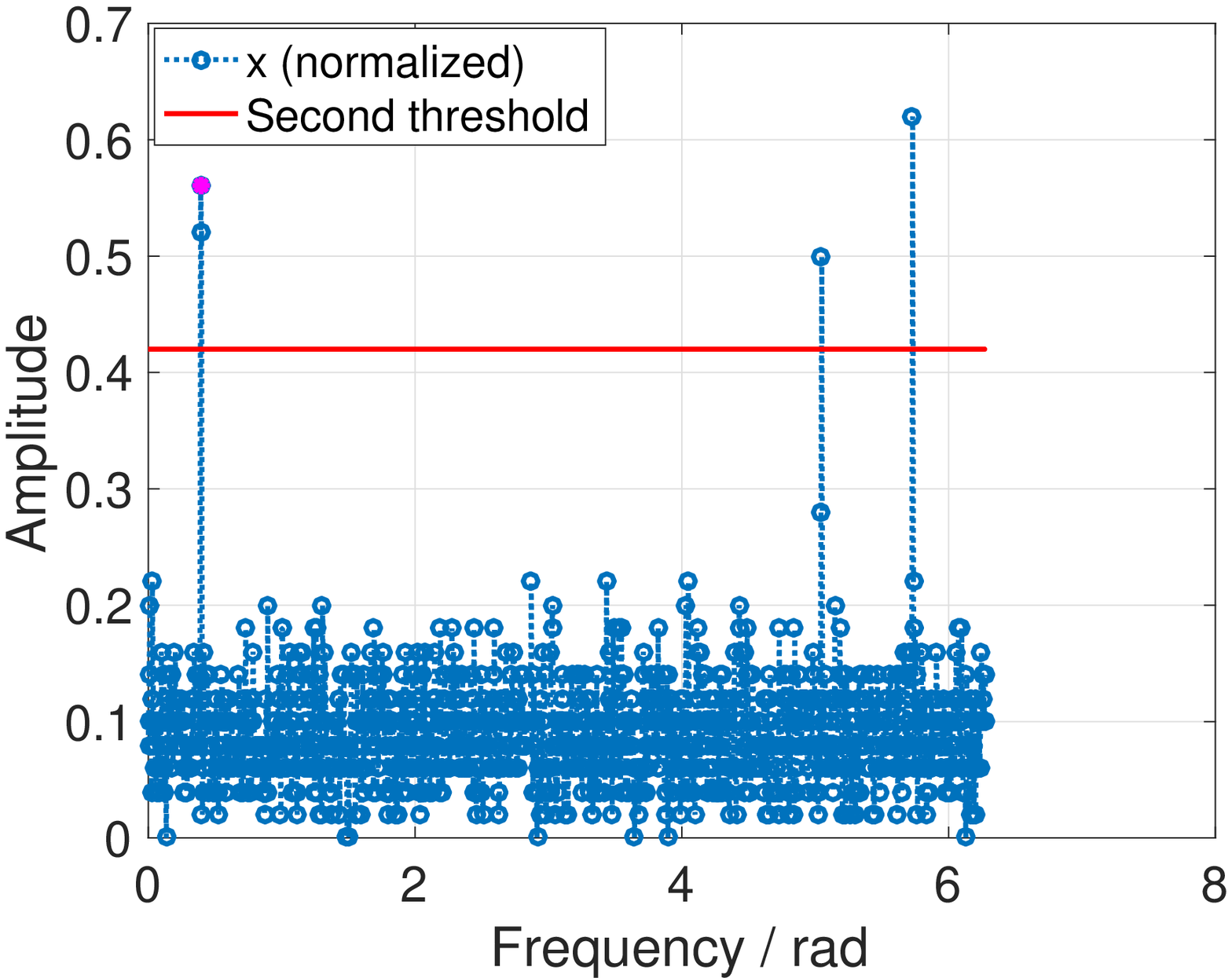} }}%
    \caption{\textbf{Unknown Sparsity.}  The threshold is optimal for $K=10$ (a) Second-stage-detection when $K=10$. (b) Second-stage-detection when $K=3$ with the same threshold. }
    \label{fig:moreSparse}
\end{figure}

\subsection{Dependency on Frequency}
Fig. \ref{fig:FreqDep} shows the dependency of $SNR_{min}^\ast$ on frequency, which verifies Claim \ref{cl:freqDep}.

\begin{figure}[!t]
	\begin{center}
	\includegraphics[scale=0.44]{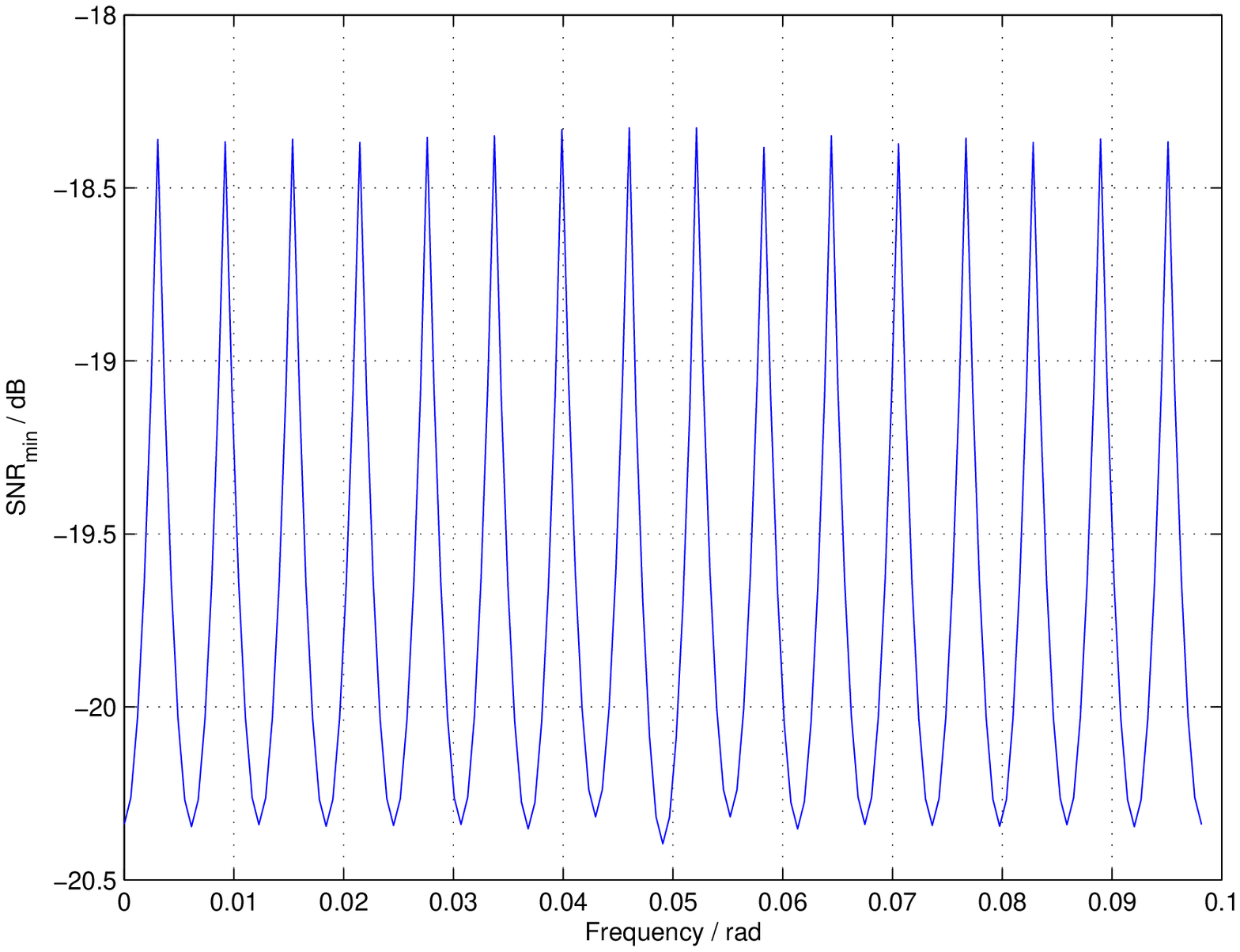} 
	\caption{\textbf{Dependence of $SNR_{min}^\ast$ on Frequency.} The fluctuation of  $SNR_{min}^\ast$ across the spectrum is mainly due to the off-grid loss of off-grid frequencies.}
	\label{fig:FreqDep}
	\end{center}
\end{figure} 

\subsection{The Receiver Operating Characteristic (ROC) Curve}
In this section, we use ROC curve to characterize the performance of RSFT with variance parameters. Fig. \ref{fig:ROC_B} shows the impact of the detection by adopting different values of $B$. A smaller $B$ lowers the detection performance, and in order to compensate it, a higher $SNR_{min}$ is required. The ROC curve for the Bartlett method is calculated by (\ref{eq:bartlettROC}) and is also shown in Fig. \ref{fig:ROC_B}.

Fig. \ref{fig:ROC_K} illustrates the relationship between detection performance and sparsity of signal. It is shown that with other parameters fixed, the sparser the signal is, the better the performance of detection is. 

In Fig. \ref{fig:ROC_CO}, we can see the impact of the SNR from the co-existing sinusoids. The higher the SNR of the co-existing sinusoids is, the worse the detection performance is. 

\begin{figure}[!t]
	\begin{center}
	\includegraphics[scale=0.44]{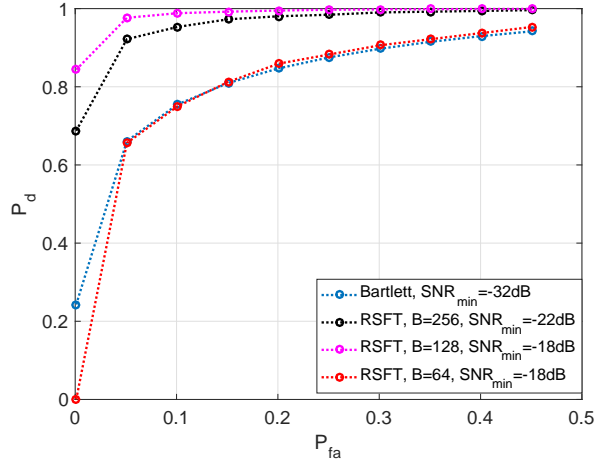} 
	\caption{\textbf{ROC Corresponding to Different Values of $B$ and $SNR_{min}$}. $\eta_p=1, K = 4$.}
	\label{fig:ROC_B}
	\end{center}
\end{figure} 

\begin{figure}[!t]
	\begin{center}
	\includegraphics[scale=0.44]{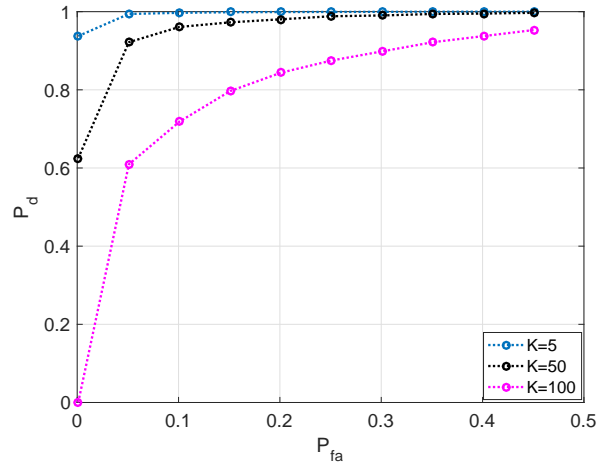} 
	\caption{\textbf{ROC Corresponding to  Different Values of $K$}. $\eta_p=1, B = 256, SNR_{min}=-20dB$.}
	\label{fig:ROC_K}
	\end{center}
\end{figure} 

\begin{figure}[!t]
	\begin{center}
	\includegraphics[scale=0.44]{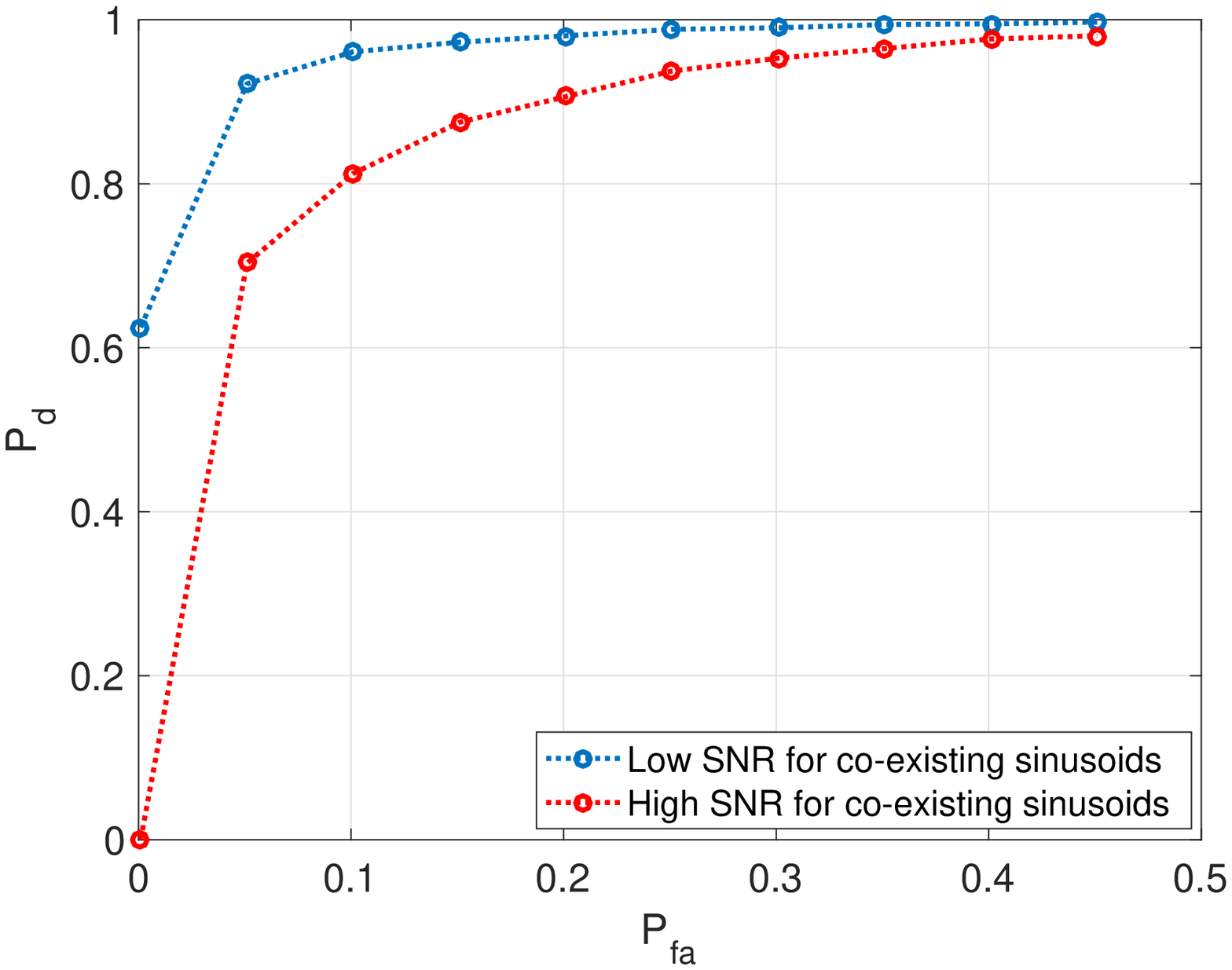} 
	\caption{\textbf{ROC Corresponding to  Different Values of SNR for Co-Existing Sinusoids}. $\eta_p=\{1, 1/\bar{P}_d(\omega_m)\}, B = 256, K=50, SNR_{min}=-20dB$.}
	\label{fig:ROC_CO}
	\end{center}
\end{figure}

\subsection{The Variance and Its Upper Bound for $[\bar{\mathbf{a}}]_i$} \label{sec:var_bounds}

In solving (\ref{eq:opt}), we take the upper bound of the variance for $[\bar{\mathbf{a}}]_i$ under both hypotheses. In this section, we show by simulation that the actual variance of $[\bar{\mathbf{a}}]_i$ is close to its upper bound. In what follows, we study $\sigma_{a1}^2(\omega_m)$; the case for $\sigma_{a0}^2(\omega_m)$ can be similarly studied. 

As shown in (\ref{eq:var_a1}), the discrepancy of $\sigma_{a1}^2(\omega_m)$ from its upper bound is due to the $\tilde{P}_d$'s dependence on $\sigma_s$, which is caused by $\beta$ and $\alpha$'s dependence on $\sigma_s$ (see (\ref{eq:tpd})). For $N$ a power of 2, a valid $\sigma_s$ can be any odd integer in $[N]$\cite{Hassanieh:2012:SPA:2095116.2095209}. Fig. \ref{fig:alpha} shows $\beta(\sigma_s)$ and $\alpha(p, \omega_m, \sigma_s)$ as functions of $\sigma_s$. The symmetry of the plot is due to the symmetry of pre-permutation window and the flat-window, as well as the modulo property of the permutation. Another observation is that most of  $\beta(\sigma_s)$ and $\alpha(p, \omega_m, \sigma_s)$ have similar values. As a result, $\tilde{P}_d(\omega_m, \sigma_s)$ has similar value for different permutations, and this is the reason for $\sigma_{a1}^2(\omega_m)$ being close to its upper bound. The Monte Carlo simulation in Fig. \ref{fig:varAppErr} shows that the approximation error, i.e., $\frac{T\bar{P}_d(\omega_m)(1-\bar{P}_d({\omega_m})) - \sigma_{a1}^2(\omega_m)}{\sigma_{a1}^2(\omega_m)}$ decreases as $T$ grows, and even for a small $T$, such as $T=10$, the error is as small as about $1.6\%$.

\begin{figure}[!t]
	\begin{center}
	\includegraphics[scale=0.44]{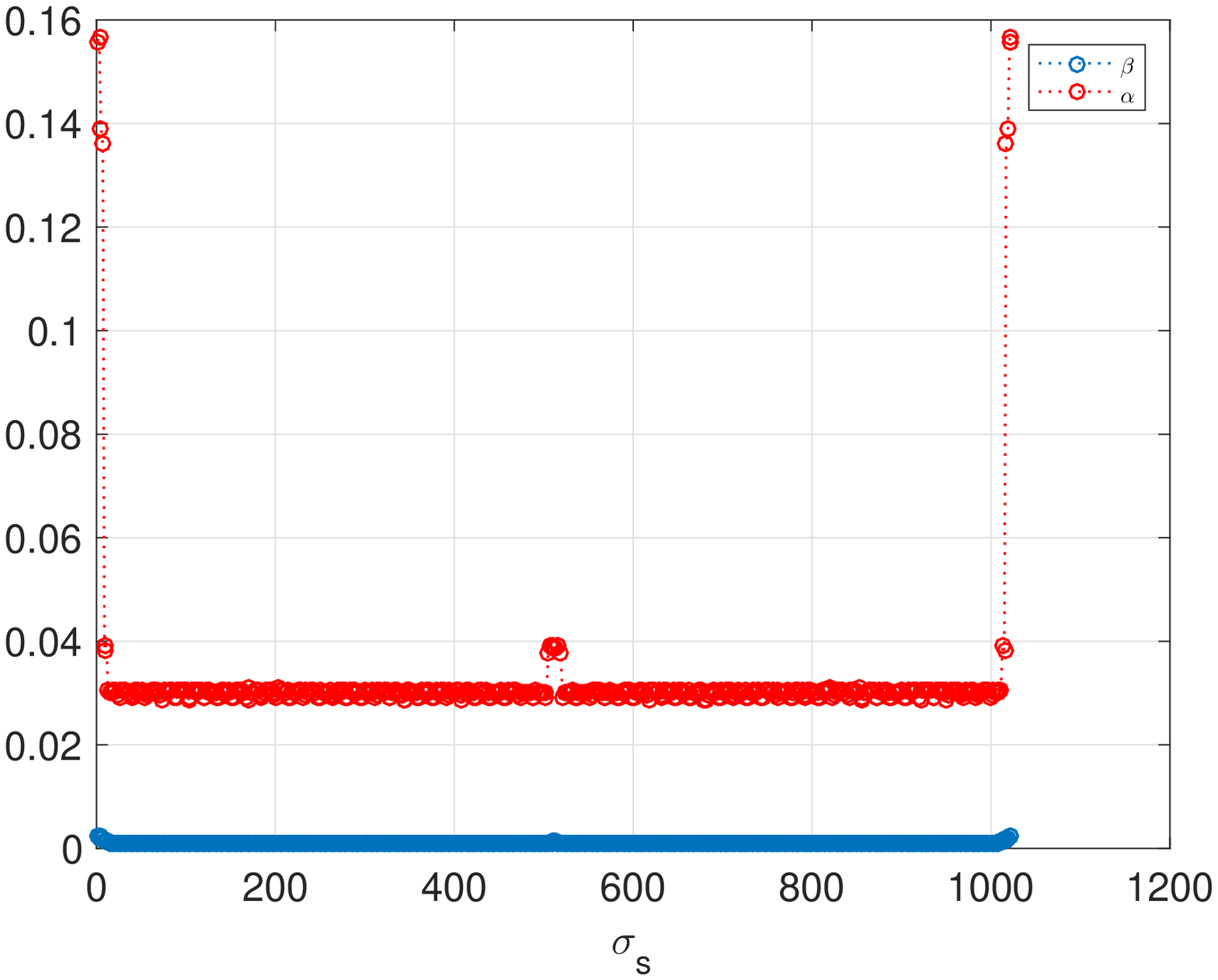} 
	\caption{\textbf{$\beta$ and $\alpha(\omega_m)$'s dependence on permutation.} $B= 64$.}
	\label{fig:alpha}
	\end{center}
\end{figure} 

\begin{figure}[!t]
	\begin{center}
	\includegraphics[scale=0.44]{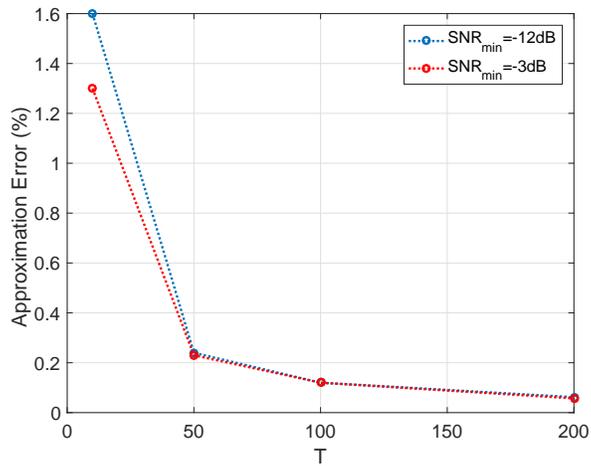} 
	\caption{\textbf{Approximation Error for the Variance.} $B = 64, T=\{10, 50, 100, 200\}$. Each value is averaged over 100 times Monte Carlo simulation.}
	\label{fig:varAppErr}
	\end{center}
\end{figure}

\section{RSFT for Ubiquitous Radar Signal Processing} \label{sec:application}
The RSFT algorithm can greatly reduce the complexity of certain high dimensional problems.   This can be signifiant in many applications, since lower complexity means faster reaction time and more economical hardware.   However, in order to apply RSFT, the signal to be processed should meet the following requirements:
\begin{itemize}
\item It should be sparse in some domain.
\item It should be sampled uniformly whether in temporal or spacial domain.
\item The SNR should be moderately high.
\end{itemize}
While many applications satisfy these requirements, in what follows, we discuss an example in Short Range Ubiquitous Radar\cite{skolnik2002systems} (SRUR) signal processing. 

\subsection{Short Range Ubiquitous Radar}
An ubiquitous radar or SIMO radar  can see targets everywhere at anytime without steering its beams as a traditional phased array radar does. In SRUR, a broad transmitting beam patten is achieved by an omnidirectional transmitter and multiple narrow beams are formed simultaneously after receiving of the reflected signal.  The beam pattens of an ubiquitous radar is shown in Fig. \ref{fig:MIMOtr} with an Uniform Linear Array (ULA) configuration. 

\begin{figure}[htp!]
	\begin{center}
	\includegraphics[scale=0.4]{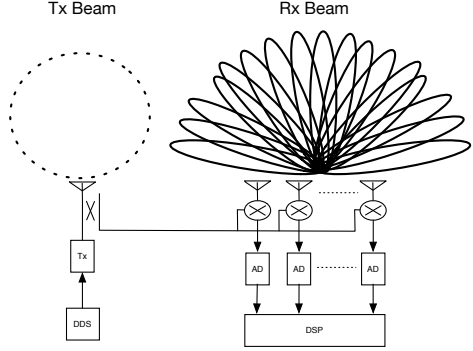} 
	\caption{\textbf{Ubiquitous Radar System Structure and Transmitting / Receiving Beam Patten.} A broad beam patten is transmitted with an omnidirectional antenna, while multiple narrow beams are formed simultaneously by the receiving array. Each receiving channel is mixed with a coupled signal from the transmitter to de-chirp the LFMCW signal, before the A/D converting.}
	\label{fig:MIMOtr}
	\end{center}
\end{figure}

An SRUR with range coverage of several kilometers could be important both in military and civilian vehicular applications. For instance, in an active protection system \cite{schade2007fast}, sensors on the protected vehicle have to detect and locate the warheads from a closely fired rocket-propelled grenade (RPG) within milliseconds. Among other sensors, SRUR's simultaneous wide angle coverage, high precision of measurement and all-weather operation make it the ideal sensor for such situation. 

\begin{figure}[!t]
	\begin{center}
	\includegraphics[scale=0.7]{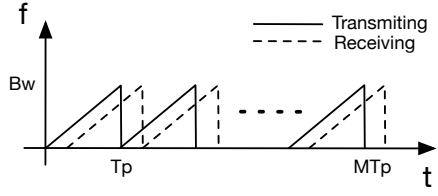} 
	\caption{\textbf{LFMCW Waveform.} The signal frequency change linearly in time with a repetition internal $T_p$. A burst contains $M$ repetitions. The range of frequency changing is the bandwidth of the system. The received signal is a delayed version of the transmitted signal.}
	\label{fig:wave}
	\end{center}
\end{figure} 

In order to achieve high range resolution and cover near range, SRUR utilizes a LFMCW waveform, as shown in Fig. \ref{fig:wave}. Mathematically, the transmitted waveform can be expressed as
\begin{equation} \label{eq:trsig}
s(t,v) = A \cos (2 \pi (f_c (t-v T_p) + \pi \rho (t-v T_p)^2),
\end{equation}
where $T_p$ is the  repetition interval (RI), $v \in [M]$ denotes the $v_{th}$ RI, $A$ is amplitude of the signal, $f_c$ is the carrier frequency and $\rho$ is the chirp rate.  Furthermore, without loss of generality, we assume that the initial phase of the signal is zero. 

Upon reception, a de-chirp process is implemented by mixing the received signal with the transmitted signal, followed by a lowpass filter. The received signal is a delayed version of the transmitted one, hence by mixing the two signals, the range information of the targets is linearly encoded in the difference of the frequencies. Hence for the $i_{th}, i\in[N]$ receiving channel, the de-chirped signal is expressed as
\begin{equation} \label{eq:r_i}
\begin{split}
r_i &= \sum_{k \in [K]} a^{[k]}(s) \cos \left(2 \pi ((f_r^{[k]}+f_d^{[k]}) (t-v T_p) + i \pi \sin \theta^{[k]} \right)\\
 &+ n(t),
\end{split}
\end{equation}
 which is a superposition of $K$ sinusoids and additive noise $n(t)$. For the $k_{th}$ sinusoid, $a^{[k]}(s), s \in [T]$ represents its amplitude, which can be modeled as a Gaussian random process. More specifically, the amplitude is assumed static within a burst, and independent between each of $T$ bursts. This assumption is consistent with the Swerling-\uppercase{I} target model\cite{skolnik1970radar}, which represents a slow fluctuation of the target RCS.  $f_r^{[k]}, f_d^{[k]}$ are the frequency components respect to target's range and velocity respectively, i.e.,
 \begin{equation} \label{eq:s_u}
f_r^{[k]}  =\frac{2\rho r_t^{[k]}}{c}, \hskip10pt  f_d^{[k]} = \frac{2v_t^{[k]}}{\lambda},
\end{equation}
where $r_t^{[k]}, v_t^{[k]}, c$ are the $k_{th}$ target's range, velocity and speed of wave propagation respectively. 
 
The DOA of the $k_{th}$ target, i.e., $\theta^{[k]}$ is defined as the angle between the line of sight (from the array center to the target) and the array normal. Assuming that the element wise spacing is $\lambda/2$, under the narrowband signal assumption, $\theta^{[k]}$ will cause an increase of phase at the neighboring array element equal to $\pi \sin \theta^{[k]}$. We omit the constant phase term in each sinusoids of (\ref{eq:r_i}), since they are irrelevant to the performance of the algorithm.
 
After AD conversion of each receiving channel, we can use the processing scheme shown in Fig. \ref{fig:MIMO_Bart} to detect the targets as well as estimate their range, velocity and DOA. More specifically, grid-based versions of $f_r^{[k]}, f_d^{[k]}, \pi \sin \theta^{[k]}$ can be calculated by applying a 3-D FFT on the windowed data cube, then, after accumulation of $T$ iterations,  the above described NP detection procedure can be performed.

\subsection{RSFT-based SRUR Signal Processing}
Although the number of samples of SRUR is reduced significantly with the analog de-chirp processing, the realtime processing with 3-D FFT is still challenging.  The RSFT algorithm is suitable for reducing the computational complexity of SRUR, since, 1) the number of targets is usually much smaller than the number of spatial resolutions cells, which implies that the signal is sparse after proper translation; 2) with a ULA and digitization of each received element, the signal is uniformly sampled both in spatial and temporal domain; and 3) the short range coverage implies that moderate high SNR is easy to achieve. 

The RSFT-based SRUR processing architecture is shown in Fig. \ref{fig:MIMO_SFT}. Compared to the conventional processing, the 3-D FFT is replaced with a 3-D RSFT, in which the aliasing procedure converts the data cube dimensions from $R \times N \times M$ to $B \times C \times D$. The 3-D FFT operated on the smaller data cube could save the computation time significantly.

\begin{figure}[!t]
	\begin{center}
	\includegraphics[scale=0.52]{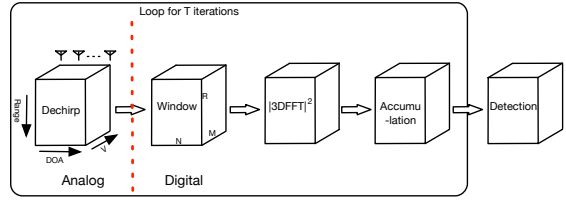} 
	\caption{\textbf{Conventional FFT-based Processing Scheme for SRUR.} }
	\label{fig:MIMO_Bart}
	\end{center}
\end{figure} 

\begin{figure}[!t]
	\begin{center}
	\includegraphics[scale=0.56]{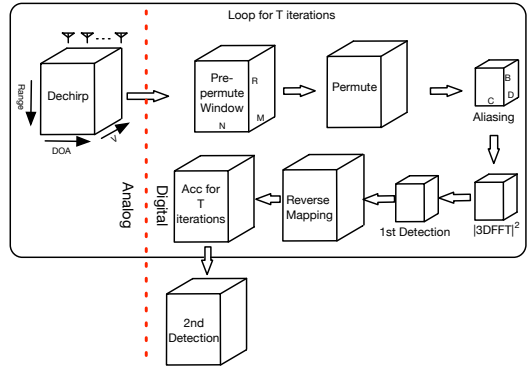} 
	\caption{\textbf{RSFT Based Processing Scheme for SRUR.} }
	\label{fig:MIMO_SFT}
	\end{center}
\end{figure} 

\subsection{Simulations}
In this section, we verify the feasibility of RSFT-based SRUR processing and compare to the SFT-based processing via simulations. The main parameters of the system are listed in Table \ref{tb:SRUR}. The design of the system can guarantee  non-ambiguous measurements of the target's range and velocity, assuming the maximum range and velocity are less than $1.5km$ and $300 m/s$, respectively.

\begin{table}[!t]
\caption{SRUR Parameters}
\label{tb:SRUR}
\centering
\begin{tabular}{|c|c|c|}
 \hline
 \textbf{Parameter} & \textbf{Symbol} & \textbf{Value} \\
 \hline
   Number of range bins& $R$& $2048$  \\
  \hline
 Number of receiving elements& $N$ & $64$  \\
  \hline
   Number of RI &$M$& $32$  \\
   \hline
   Wave length &$\lambda$& $0.03m$  \\
   \hline
    Wave propagation speed &$c$& $3\times 10^8 m/s$  \\
   \hline
       Bandwidth &$B_w$& $150MHz$  \\
   \hline
       Repetition interval &$T_p$& $5\times10^{-5}s$  \\
   \hline
          Maxima range &$R_{max}$& $1.5\times10^{3}m$  \\
   \hline
          Chirp rate & $\rho$ &$3\times10^{12}Hz/s$  \\
   \hline
          Sampling frequency (IQ) & $f_s$ &$41MHz$  \\
   \hline
\end{tabular}
\end{table}

We generate signals from $4$ targets according to (\ref{eq:r_i}). The range, velocity and DOA of targets can be arbitrarily chosen within the unambiguous space, which implies that the corresponding frequency components do not necessarily lie on the grid. The targets' parameters used in the simulation are listed in Table \ref{tb:target}. For Targets $3$ and $4$, we set the same parameters except that their DOA is $4^\circ$ apart, which is close to the theoretical angular resolution after windowing for the Bartlett beamforming. To compare the RSFT and the SFT for different scenarios, we adopt two sets of SNR for targets. Specifically, for the first set, we use the same SNR, i.e., $-10dB$ for different targets. And for the second set, we assign different SNR for different targets, which is more close to a realistic scenario. 

\begin{table}[!t]
\caption{Target Parameters}
\label{tb:target}
\centering
\begin{tabular}{|c|c|c|c|c|}
 \hline
 \textbf{Target} & \textbf{Range ($m$)} & \textbf{Velocity ($m/s$)} &  \textbf{DOA ($\circ$)} & \textbf{SNR (dB)}\\
 \hline
1& $1000$& $100$ & $30$ & $-10/0$\\
  \hline
  2& $500$& $50$ & $0$ & $-10/-10$\\
  \hline
  3& $350$& $240$ & $-16$ & $-10/-20$\\
  \hline
  4& $350$& $240$ & $-20$ & $-10/-20$\\
  \hline
\end{tabular}
\end{table}

The SFT from \cite{Hassanieh:2012:SPA:2095116.2095209} is $1$-dimensional. In order to reconstruct targets in the 3-D space, we extend the SFT to high dimension with the techniques described in Section \ref{sec:hd}. Another obstacle of applying SFT is that it needs to know the number of peaks to count in the detection stages. In our radar example, even the knowledge of exact number of targets is presented, it is still not clear how to determine the number of counting peaks due to existence of the large number of peaks from leakage.
In the experiment, we gradually increase the number of counting peaks until all the targets are recovered. For the case of the same SNR setting, all the targets are recovered after around $20$ peaks are counted. While for the second SNR setting, we need to count nearly $200$ peaks to recover the weakest targets (Targets $3$ and $4$). Fig. \ref{fig:3DFFT} and \ref{fig:3DSFT1} show the targets reconstruction results for the two settings, respectively. The former shows both SFT and RSFT methods can perfectly recover all the targets, whose SNR has the same value. From Targets $3$ and $4$ we can see that the SFT-based method achieves a better resolution than its RSFT counterpart, since the former does not require a pre-permutation window. For the second scenario, the SFT-based method shows the side-lobes of the stronger targets, while the RSFT-based method only recovers the (extended) main-lobes of all the targets. 

The simulation shows that the RSFT-based approach is better than its SFT counterpart for a realistic scenario, within which the signal has a reasonable dynamic range. We also want to emphasis that in a real radar system, determine the number of counting peaks for the SFT-based method lacks a theoretical foundation, while the thresholding approach in the RSFT is consistent with the conventional FFT-based processing, both of which are based on the NP criterion.

\begin{figure}[!t]
	\begin{center}
	\includegraphics[scale=0.43]{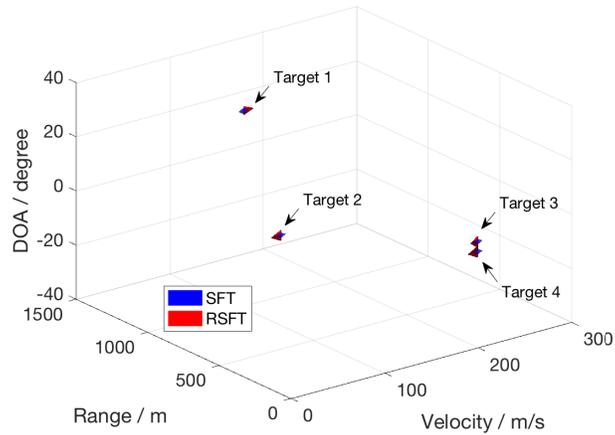} 
	\caption{\textbf{Target Reconstruction via 3-D SFT and RSFT with The Same SNR for 4 Targets.} Both SFT and RSFT based methods can reconstruct all the targets. The former has better resolution without pre-permutation windowing.}
	\label{fig:3DFFT}
	\end{center}
\end{figure}

\begin{figure}[!t]
	\begin{center}
	\includegraphics[scale=0.43]{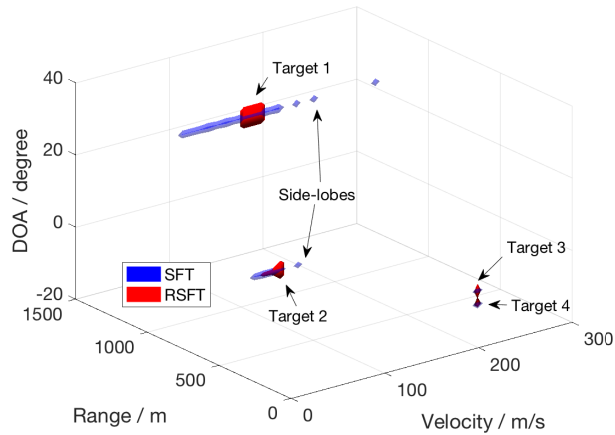} 
	\caption{\textbf{Target Reconstruction via 3-D SFT and RSFT with Different SNR for 4 Targets.} The SFT-based processing recovers the side-lobes of the stronger targets, while the RSFT-based method only recovers the main-lobes of targets. }
	\label{fig:3DSFT1}
	\end{center}
\end{figure}

\section{Conclusion} \label{sec:conclusion}
In this paper, we have addressed practical problems of applying SFT in real-life applications,  based on that, we have proposed a modified algorithm (RSFT). The optimal design of  parameters in RSFT has been analyzed, and the relationship between system sensitivity and computational complexity has been investigated. Some interesting properties of the RSFT have also been revealed by our analysis, such as the performance of detection not only relies on the frequency under examination, but also depends on other co-existing significant frequencies, which is very different from the traditional FFT-based processing. The analysis has revealed that RSFT could provide engineers an extra freedom of design in trading off system's ability of detecting weak signals and complexity. Finally, the context of the application of RSFT has been discussed, and a specific example  for short range ubiquitous radar signal processing has been presented.

\appendices
\section{Bartlett Method Analysis} \label{app:Bartlett}
Detecting each sinusoid and estimating the corresponding  frequencies in (\ref{eq:sig_model}) can be achieved by Bartlett spectrum analysis followed by an NP detection. A window is applied on each data segment to reduce the leakage of off-grid frequencies. In order to enhance the computational efficiency,  the FFT is adopted (see Algorithm \ref{alg:Bartlett}). In the $N$-dimensional case, each step in Algorithm \ref{alg:Bartlett} is done in $N$-dimension.

The analysis of the Bartlett method can be found in the literature \cite{stoica2005spectral}, however, such analysis usually focuses on bias, variance and frequency resolution, while the detection performance has not been throughly studied in connection with an NP detector. In \cite{so1999comparison}, the performance of detecting a single sinusoid is discussed and the theoretical analysis is provided for on-grid frequency. However, most typically, the signal contains multiple significant frequencies, which are off-grid. In what follows, we analyze the asymptotic  performance of Algorithm \ref{alg:Bartlett} based on the signal model of (\ref{eq:sig_model}), which is a multiple-frequencies case and does not restrict the frequency being on-grid.  Moreover, as compared to the signal model in  \cite{so1999comparison}, which assumes that the sinusoid has a deterministic amplitude, we model the complex amplitude of each sinusoid as a circularly symmetric Gaussian random variable.  This modeling reflects the stochastic nature of each sinusoid, and is consistent with the Swerling-\uppercase{I} target model in radar signal cases, since the square of a circularly symmetric Gaussian random variable has an exponential distribution.

\begin{algorithm}
\caption{Bartlett Method algorithm}\label{alg:Bartlett}
\textbf{Input:} complex signal $\mathbf{r}$ in any fixed dimension \\
\textbf{Output:} $\mathbf{o}$, frequency domain representation of input signal 

\begin{algorithmic}[1]
\Procedure{Bartlett}{$\mathbf{r}$} 
\State $\mathbf{x} \gets 0$
\For {$i=0 \to T$} 
	\State Windowing: $\mathbf{u} \gets \mathbf{W} \mathbf{r}$
	\State N-D FFT:  $\hat{\mathbf{u}}$ $\gets$ $\mathop{\mathrm{NDFFT}}(\mathbf{u})$
	\State Accumulation: $\mathbf{x} \gets \mathbf{x}+|\hat{\mathbf{u}}|^2$
\EndFor
         \State Detection: $\mathbf{o} \gets \mathop{\mathrm{NPdet}}(\mathbf{x})$ 
\State \textbf{return} $\mathbf{o}$
\EndProcedure
\end{algorithmic}
\end{algorithm}

The analysis of Algorithm \ref{alg:Bartlett} follows a similar fashion with the analysis of the RSFT, and we thus use the same notation as in Section \ref{sec:optimal}.  Our goal is to derive the relationship between $P_d$ and $P_{fa}$, which is also related to the worst case signal SNR, i.e., $SNR_{min}$.

After windowing and FFT, the signal becomes
\begin{equation} \label{eq:Bart_f}
\hat{\mathbf{u}}_s = \overline{\mathbf{D}}  \mathbf{W} \mathbf{r}_s,\;s \in[T],
\end{equation}
where $\overline{\mathbf{D}}  \in \mathbb{C}^{N \times N}$ is the DFT matrix.

Substituting (\ref{eq:sig_model}) into (\ref{eq:Bart_f}),  for the $k_{th}, k \in [N]$ entry of  $\hat{\mathbf{u}}_s$, we get
\begin{equation} \label{eq:f_km}
\begin{split}
[\hat{\mathbf{u}}_s]_{k} &= b_{m,s} \mathbf{d}^H_k \mathbf{W}  \mathbf{v} (\omega_m) \\
&+ \sum_{j \in [K]\setminus m} (b_{j,s} \mathbf{d}^H_k \mathbf{W} \mathbf{v} (\omega_j))\\
&+ \mathbf{d}^H_k \mathbf{W} \mathbf{n}_s,
\end{split}
\end{equation}
where  $\mathbf{d}_k$ is the $k_{th}$ column of $\overline{\mathbf{D}}$, i.e., $\mathbf{d}_k = [1\quad e^{j k \Delta \omega_N}\; \cdots\; e^{jk(N-1)\Delta \omega_N}]^T$, and $\Delta \omega_N = 2\pi/N$.
 
 Since $[\hat{\mathbf{u}}_s]_{k}$ is a linear combination of $b_{i,s}, [\mathbf{n}_s]_{j}, i\in[K], j\in[N]$, it is a circularly symmetry Gaussian scalar with distribution 
 \begin{equation} \label{eq:dist_f_k}
[\hat{\mathbf{u}}_s]_{k} \sim \mathcal{CN} (0, \sigma_{uk}^2). 
\end{equation}

The hypothesis test on each frequency bin is formulated as
\begin{itemize}
\item $\mathcal{H}'0$: no significant frequency exists.
\item $\mathcal{H}'1$: there exists a significant frequency, with its SNR at least equals to $SNR_{min}$.
\end{itemize}

We assume the side-lobes of the significant frequencies are far below the noise level due to windowing, then under $\mathcal{H}'1$ and $\mathcal{H}'0$, respectively, we have the following approximation for $\sigma_{uk}^2$

\begin{equation} 
\begin{split}
&\sigma_{uk|H1}^2 \approx \sigma_{bm}^2 \alpha' ,\\ 
&\sigma_{fk|H0}^2 \approx \sigma_n^2 \beta'. \\
\end{split}
\end{equation}
where $\alpha' = |\mathbf{d}^H_k \mathbf{W}  \mathbf{v} (\omega_m)|^2$ and $\beta' = \|\mathbf{w}\|^2$.

%

The LLRT yields the sufficient statistics
\begin{equation}
l_k = {1 \over T} \sum_{s\in [T]} |[\hat{\mathbf{u}}_s]_{k}|^2 \mathop{\gtrless}_{\mathcal{H}'0}^{\mathcal{H}'1} \gamma .
\end{equation}

We study its asymptotic performance. Assume that $T$ is moderately large, after applying central limit theory, the test statistic distributes as Normal distributions in both hypothesis, i.e.,
\begin{equation}
\begin{split}
&l_{k|H0} \sim \mathcal{N}(\sigma_{uk|H0}^2, {\sigma_{uk|H0}^4 \over T}) ,\\
&l_{k|H1} \sim \mathcal{N}(\sigma_{uk|H1}^2, {\sigma_{uk|H1}^4 \over T}) .
\end{split}
\end{equation}

Finally, we can relate $P_d$ and $P_{fa}$ as
\begin{equation} \label{eq:bartlettROC}
P_d = 1-\Phi \left( {\beta' \Phi^{-1}(1-P_{fa}) + \sqrt{T}(\beta' - SNR_{min} \alpha') \over SNR_{min} \alpha'} \right),
\end{equation}
where $\Phi(\cdot)$ is the CDF of standard normal distribution. An exemplar ROC curve calculated with (\ref{eq:bartlettROC}) is demonstrated in Fig. \ref{fig:ROC_B}.

\section{Proof of Properties of Mapping and Reverse Mapping} \label{app:map}
\subsection{Proof of Property \ref{prop:mapping_reversbility}}

\begin{proof}
According to Definition \ref{def:map}, the mapping can be split into two stages: 1) apply modular multiplication to $i$, i.e., $k = [\sigma i]_N \in [N]$; and, 2) convert $k$ into $j \in [B]$ with $j = \lfloor k B/N \rfloor$.

Similarly, according to Definition \ref{def:rev_map}, the reverse-mapping also can be split into two stages: 1) dilate $j \in [B]$ into $\mathbb{S} \equiv \{v \in[N] \mid   j\frac{N}{B} \le v < (j+1)\frac{N}{B}]\} \subset [N]$; and, 2) apply inverse modular multiplication on $\mathbb{S}$, i.e., $\mathcal{R}(j, \sigma^{-1}) \equiv \{ [u \sigma^{-1}]_N \mid u\in \mathbb{S} \}$.

The first stage of reverse-mapping is the inverse operation of the second stage of mapping, and as a result, $k \in \mathbb{S}$. Hence $i = [k \sigma^{-1}]_N \in \mathcal{R}(j, \sigma^{-1})$ as desired.
\end{proof}

\subsection{Proof of Property \ref{prop:mapping_exculusiveness}}

\begin{proof}
We use the two stages of the reverse-mapping in the proof of Property \ref{prop:mapping_reversbility}. The first stage of the reverse-mapping for $i$ and $j$ yields $\mathbb{S}_1 \equiv \{v \in[N] \mid  i\frac{N}{B} \le v < (i+1)\frac{N}{B}]\}$ and $\mathbb{S}_2 \equiv \{v \in[N] \mid j\frac{N}{B} \le v < (j+1)\frac{N}{B}]\}$, respectively. It is not difficult to verify that $\mathbb{S}_1 \cap \mathbb{S}_2 = \emptyset$, provided that $i \neq j$. 

In what follows, we will prove that the second stage of the reverse-mapping also gives distinct results.
Assume that there exists $m \in \mathbb{S}_1, n \in \mathbb{S}_2$, such that $[m \sigma^{-1}]_N = [n \sigma^{-1}]_N$. Modularly multiply both sides with $\sigma$ yields that $m = n$, which is contradictory with $\mathbb{S}_1 \cap \mathbb{S}_2 = \emptyset$. Hence both stages of the reverse-mapping guarantee the results are distinct for $i \neq j$.
\end{proof}

\section*{Acknowledgment}
The authors would like to thank Dr. Predrag Spasojevic and Dr. Anand Sarwate from Rutgers university for initial support of this work. The work of SW was jointly supported by China Scholarship Council and Shanghai Institute of Spaceflight Electronics Technology.  The work of VMP was partially supported by an ARO grant W911NF-16-1-0126.  

\bibliographystyle{ieeetr}
\bibliography{SFTbib}

%
%

\end{document}